\DeclarePairedDelimiter{\norm}{\lVert}{\rVert}
\DeclareMathOperator*{\argA}{arg}
\def\thmheadbrackets#1#2#3{%
	\thmname{#1}\thmnumber{\@ifnotempty{#1}{ }\@upn{#2}}%
	\thmnote{ {\the\thm@notefont[#3]}}}
\newtheoremstyle{defbrakets}% Name
{}% space above
{}% space below
{\normalfont}% body font
{}% indent
{\bfseries}% head font
{.}% punctuation after head
{ }% space after head (has to be space or dimension!)
{\thmheadbrackets{#1}{#2}{#3}}% head spec
\newtheorem{thm}{Theorem}
\newtheorem{lem}[thm]{Lemma}
\theoremstyle{definition}
\theoremstyle{defbrakets}
\newtheorem{defn}{Definition}
\newtheorem{rem}{Remark}
\newtheorem{plm}{Problem}
\newcommand{\z}[1]{\textcolor{black}{#1}}
\begin{document} 
	
	\title{Complete coordination of robotic fiber positioners for massive spectroscopic surveys\footnote{This work was financially supported by the Swiss National Science Foundation (SNF) grant number 20FL21\_185771 and the SLOAN ARC/EPFL agreement number SSP523.}}
	
	\author{Matin Macktoobian$^{a}$, Denis Gillet$^{a}$, and Jean-Paul Kneib$^{b}$}
	\date{%
		$^a$School of Engineering, Swiss Federal Institute of Technology in Lausanne (EPFL), 1015 Lausanne, Switzerland\\%
		$^b$School of Basic Sciences, Swiss Federal Institute of Technology in Lausanne (EPFL), 1015 Lausanne, Switzerland\\[2ex]%
		Corresponding author: Matin Macktoobian (matin.macktoobian@epfl.ch) 
	}

\begin{textblock}{14}(1,1)
	\noindent\textbf{\color{red}Published in ``Journal of Astronomical Telescopes, Instruments, and Systems'' DOI: 10.1117/1.JATIS.5.4.045002}
\end{textblock}

\maketitle

\begin{abstract}
Robotic fiber positioners play a vital role in the generation of massive spectroscopic surveys. The more complete a positioners set is coordinated, the more information its corresponding spectrograph receives during an observation. The complete coordination problem of positioners sets is studied in this paper. We first define the local and the global completeness problems and determine their relationship. We then propose a new artificial potential field according to which the convergences of a positioner and its neighboring positioners are cooperatively taken into account. We also discover the required condition for a complete coordination. We finally explain how the modifications of some of the parameters of a positioners set may resolve its incompleteness coordination scenarios. We verify our accomplishments using simulations.
\end{abstract}

\textbf{keywords}: spectroscopic surveys, robotic fiber positioners, astrobots, coordination, telescopes, focal plane, optimization

\section{Introduction}
Automation\footnote{Throughout this paper, scalars and matrices are represented by regular and bold symbols, respectively.} and robotics have been at the service of space applications for a long time to accomplish different tasks including cargo transportation \cite{kistler2006system}, instrumentation \cite{rochus2007new}, exploration \cite{maimone2007two}, etc. A well-known example is the mobile servicing system \cite{stieber1999overview} mounted at the international space station. This manipulator system executes critical on-orbit assembly tasks and contributes to the external maintenance of the station. Space rovers \cite{iagnemma2004online,macktoobian2013time,macktoobian2016time} represent another category of robotic artifacts which have been extensively used in planetary exploration missions. Observational astronomy benefits from the space robotics as well. Because of the high costs and the safety-critical nature of space applications, autonomy has been taken into account from the earlier days of the space age for orbital observation purposes. For example, the free flyer engineering gave rise to the realization of advanced space telescopes \cite{papadopoulos1994dynamics}. On the other hand, the primary control systems corresponding to ground telescopes were relatively simple, and the level of the required autonomy was not as complicated as that of space telescopes. However, the requirements of recent observation projects need the development of the ground telescopes with higher degrees of autonomy and functional efficiency. In particular, the current trend of astronomy seeks the generation of the whole map of the observable universe using ground telescopes. Supplied with such a cosmological blueprint, geometrical characterization of the universe facilitates the better understanding of the expansion of the universe and the distribution of dark energy all over it. For this purpose, the generation of sky surveys based on the spectroscopic approaches \cite{mazets1982cosmic} has been taken into account.

Traditional telescopes have successfully supported observational operations. However, new requirements of the astronomy, as explained above, require the ground telescopes to contribute to the collection of spectroscopic surveys in a more versatile manner. ``SLOAN Digital Sky Survey'' (SDSS) \cite{york2000sloan} is a set of projects which aim to develop new telescopes observing the evolution of the universe based on cosmic multi-object spectrographs\cite{zhao2012lamost,de20124most}. The current generation of this project is SDSS-V\cite{macktoobian2019navigation} which is under active development. In particular, a spectrograph\cite{takada2014extragalactic} potentially encompasses thousands of optical fibers located at the focal plane of a host telescope in a specific geometrical configuration. Each optical fiber is assigned to a specific target in the sky to be observed by collecting a particular range of the electromagnetic spectral information, particularly visible light, corresponding to that target. The desired range may be visible light, infra red, etc. Since each observation assigns a different target to an optical fiber, a planar RR robotic positioner system is attached to each optical fiber to rotationally move it and to reach its target position located at its configuration space. To maximize the number of the observed objects during each observation, one would like to maximize the number of the mounted fiber positioners at the focal plane of their host telescope. Such a dense hexagonal formation of positioners gives rise to a non-trivial coordination problem for their trajectory planning and collision avoidance. 

Reconfiguration \cite{qasim2018model} refers to the systematic switchings of various configurations of a system each of which exhibits a specific set of functionalities. In particular, coordination problem is a specific subclass of the reconfiguration problem which is vastly studied in different areas including power systems \cite{abdelaziz2010distribution}, hybrid systems \cite{wang2016highly}, discrete-event systems \cite{macktoob2017auto}, consensus of multi-agent systems \cite{olfati2007consensus}, etc. \z{Supervisory control theory\cite{ramadge1987supervisory} was used to seek complete coordinations of robotic fiber positioners\cite{macktoobian2019supervisory}. The major hurdle to use this approach is the curse of dimensionality when the size of a robotic fiber positioners system grows. Then, the required processing is not practically feasible to find a complete solution.} To be specific, the coordination of robotic fiber positioners is challenging because any solution to this problem has to fulfill some critical requirements in both spatial and temporal perspectives. In particular, positioners are often arranged in hexagonal formations, so each positioner neighbors 6 other peers. Furthermore thanks to the applied miniaturization to the manufacturing process of small positioners \cite{horler2018high}, the workspace of a positioner overlap those of its neighboring positioners. Thus, the collision avoidance rises as a major issue to be solved. On the other hand, the on-time coordination of the positioners set is desired after finishing a specific observation to point to the objects of the next observation. Since each observation is extremely time-dependent, the coordination of the system shall be executed in a limited amount of time between two successive observations. Thus, the solution to the coordination problem of robotic fiber positioners has to be both reliable against collisions and efficient in view of performance.

The solutions to the trajectory planning and the collision avoidance problems directly depend on the number and the mechanical specifications of the used positioners in a particular subproject of the SDSS project. For example in the case of the ``The Dark Spectroscopic Instrument'' (DESI) \cite{1aghamousa2016desi,2aghamousa2016desi} project, an artificial potential field (APF) approach is proposed to solve the collision-free trajectory planning of positioners. This method uses a decentralized navigation function based on the notion of artificial potential fields. In particular, the arms of the positioners used in this project are long enough to enter the workspace of any neighboring positioners. However since the contentions are not considerable, all the positioners can converge to their target positions. In other words, the positioners of the DESI project compose a complete system. In contrast, the completeness is not realized in the case of "The Multi Object Optical and Near-infrared Spectrograph" (MOONS) \cite{cirasuolo2014moons} project. In this case, the length of the second arm of each positioner is two times longer than those of the positioners of the DESI project. To solve the trajectory planning problem associated with the MOONS project, the planning algorithm was modified \cite{tao2018priority} to take two subjects into account. First, not every colliding situations is managed by the navigation function. So, a priority-based decision-making layer was added to the decentralized navigation function to handle deadlocks and oscillations which could not be handled by the navigation function. Based on this approach, the positioners which are assigned to more important objects are prioritized in the coordination of the system. Thus, some positioners may not reach their target at all. The algorithm cannot generally coordinate the system such that all positioners reach their target positions. In other words, the coordination problem is not complete with respect to the solutions of this algorithm. Complete coordination leads to the collection of the full information which is planned to be collected during an observation. However, no analysis has been yet applied to explore the conditions based on which a solution to a coordination problem of positioners is complete. This gap opens an avenue for the potential modification of the current coordination algorithm to realize the complete coordination of positioners. 

In this paper, we formally analyze and solve the complete coordination problem associated with robotic optical fiber positioners. We obtain a completeness condition whose fulfillment guarantees the complete coordination. The remainder of the report is organized and follows. Sec. \ref{sec:mecSpec} briefly reviews the mechanical specifications of a typical positioner. Sec. \ref{sec:locGlob} establishes the global completeness problem whose solution shall guarantee the convergence of all positioners of a telescope. We then define the local completeness problem corresponding to the convergence of a positioner and all of its neighboring positioners. In particular, we take a distributed scheme into account to show that given a set of positioners, if all local completeness problems corresponding to neighboring region of the system are complete, then the overall system is globally complete. Sec. \ref{sec:CAPF} proposes a new class of artificial potential fields, i.e., cooperative artificial potential fields (CAPFs). The advantage of a CAPF compared to an APF is that the attractive term of the CAPF considers not only the convergence of its own positioner agent but also the convergence of its neighboring positioners. Thanks to the proved solvability of the global completeness problem based on the completeness of its local completeness problems, Sec. \ref{subsec:compCond} obtains the required condition for the solvability of the local completeness problem. Sec. \ref{subsec:compSeek} establishes a strategy for completeness seeking when a system of positioners is incomplete with respect to a particular set of parameter specifications of the system. In these situations, we indeed propose to modify the paramaters corresponding to the specification of the system's CAPFs and/or the definition of the desired observation to resolve the encountered incompleteness. Sec. \ref{sec:diss} compares CAPF to APF in view of the properties of the navigation process such as computational complexity and convergence time. We evaluate our accomplishments by simulations in Sec. \ref{sec:sim}. Our concluding remarks are finally drawn in Sec. \ref{sec:conc}.
\section{Mechanical Characterization}
\label{sec:mecSpec}
This section follows a top-down approach to briefly introduce cosmic spectroscopy and robotic fiber positioners. In particular, we first study the process of observation based on spectrographs. We particularly describe the role of robotic fiber positioners in the quoted process. Then, we present details about the mechanical structure of a typical robotic fiber positioner and its kinematic formulation.
\begin{figure}%[tb]
	\centering
	\includegraphics[scale=1.3]{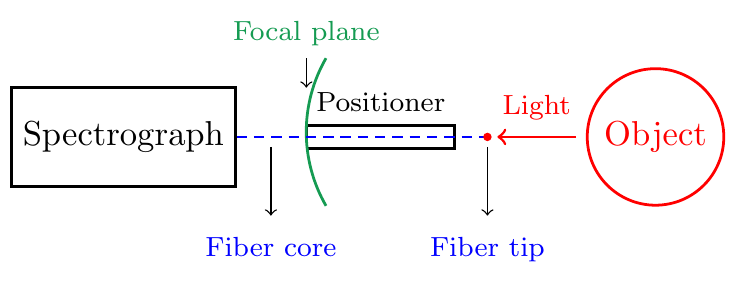}
	\caption{The schematic of a telescope equipped with a robotic optical fiber positioner}
	\label{fig:1}
\end{figure}

Massive spectroscopic surveys are generated by collecting spectral information coming from massive sets of objects by telescopes. Then, the information is processed by a spectrograph to construct a unified map of those objects. Fig. \ref{fig:1} illustrates a typical observation task of a single robotic fiber positioner. In particular, a string of optical fiber is passed through a robotic positioner. The tip of the optical fiber can be moved by the motors of the robotic positioner to point to a specific object in the sky. The robotic positioner is mounted at a curved plate called focal plane inside the telescope. At the back of the focal plane, the optical fiber is connected to a spectrograph. Fianlly, the spectrograph processes the received signal from the tip of the fiber which yields the generation of the desired map.   

Each positioner is a planar RR manipulator whose end-effector shall reach the point at which its fiber has to observe an object based on a particular observation.
The forward kinematics corresponding to the workspace of each positioner is described as below.
\begin{equation}
\label{eq:kin}
\bm{q^{i}} = \bm{q^{i}_{b}} + \begin{bmatrix}
\cos{(\theta^{i})} & \cos{(\theta^{i} + \phi^{i})}\\
\sin{(\theta^{i})} & \sin{(\theta^{i} + \phi^{i})}
\end{bmatrix}\bm{l}
\end{equation}
Here the $i$\textsuperscript{th} positioner is located at $\bm{q^{i}} = \begin{bmatrix}x^{i} & y^{i}\end{bmatrix}^\intercal$ with respect to a universal frame attached to the
focal plane of the host telescope. $\bm{q^{i}_{b}}= \begin{bmatrix}x^{i}_{b} & y^{i}_{b}\end{bmatrix}^\intercal$ is also the base coordination of the positioner. The lengths of rotational links are represented by $\bm{l} = \begin{bmatrix}l_{1} &l_{2}\end{bmatrix}^\intercal$. The angular positions of the $i$\textsuperscript{th} positioner are denoted by $\theta^{i}$ and $\phi^{i}$. The quoted parameters are depicted in Fig. \ref{fig:2}(a). Fig. \ref{fig:2}(b) illustrates the positioners placement in the focal plane of a typical telescope.	
\begin{figure}%[b]
	\centering
	\includegraphics[scale=1]{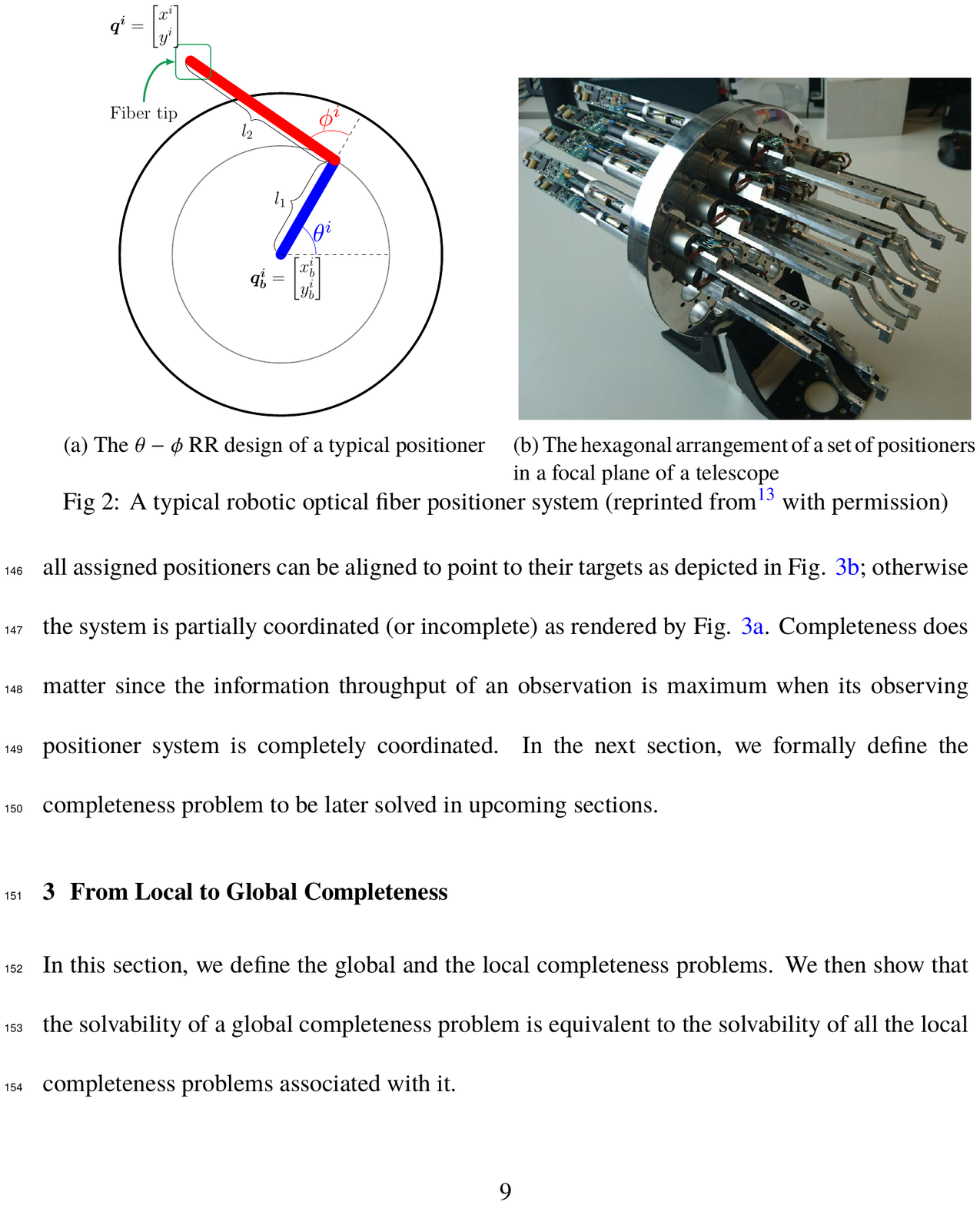}
	\caption{A typical robotic optical fiber positioner system (reprinted from \cite{macktoobian2019navigation} with permission). (a) The $\theta-\phi$ RR design of a typical positioner. (b) The hexagonal arrangement of a set of positioners in a focal plane of a telescope}
	\label{fig:2}
\end{figure}

The focal plane area of a telescope is composed of a set $\mathcal{P}$ of fiber positioners as depicted in Fig. \ref{fig:2}(b). Each observation includes a set of target objects each of which should be observed by a fiber positioner. A system of positioners is called completely coordinated (or complete) if all assigned positioners can be aligned to point to their targets as depicted in Fig. \ref{fig:3}(a); otherwise the system is partially coordinated (or incomplete) as rendered by Fig. \ref{fig:3}(b). Completeness does matter since the information throughput of an observation is maximum when its observing positioner system is completely coordinated. In the next section, we formally define the completeness problem to be later solved in upcoming sections. 
\begin{figure}[t]
	\centering
	\includegraphics[scale=1]{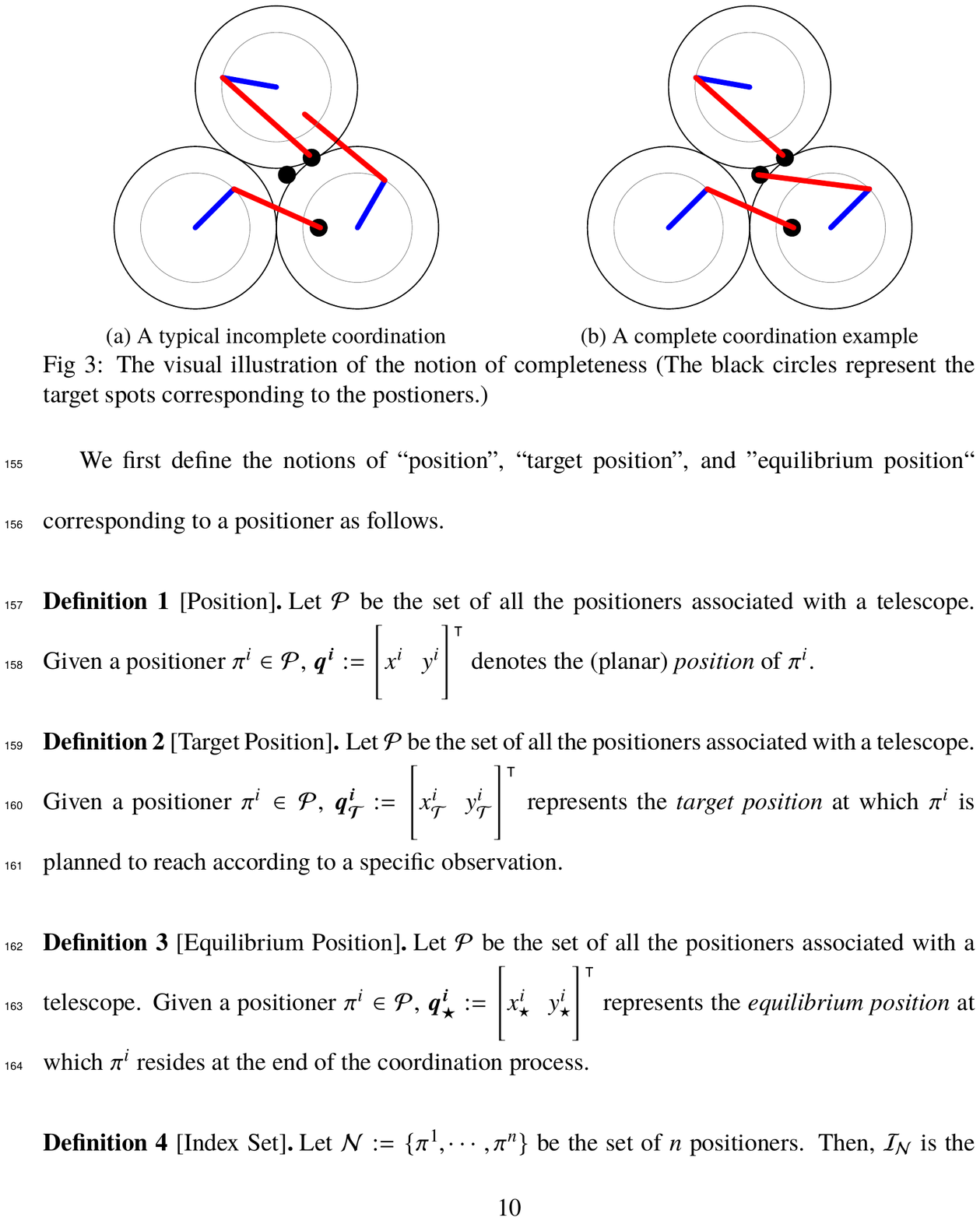}
	\caption{The visual illustration of the notion of completeness (The black circles represent the target spots corresponding to the postioners.). (a) A typical incomplete coordination. (b) A complete coordination example}
	\label{fig:3}
\end{figure}
\section{From Local to Global Completeness}	
\label{sec:locGlob}
In this section, we define the global and
the local completeness problems. We then show that the solvability of a global completeness problem is equivalent to the solvability of all the local	completeness problems associated with it.

We first define the notions of ``position'', ``target position'', and ''equilibrium position`` corresponding to a positioner as follows.	
\begin{defn}[Position]
	Let $\mathcal{P}$ be the set of all the positioners associated with a telescope. Given a positioner $\pi^{i} \in \mathcal{P}$, $\bm{q^{i}} := \begin{bmatrix}x^{i} &y^{i}\end{bmatrix}^\intercal$ denotes the (planar) \textit{position} of $\pi^{i}$. 
\end{defn}
\begin{defn}[Target Position]
	Let $\mathcal{P}$ be the set of all the positioners associated with a telescope. Given a positioner $\pi^{i} \in \mathcal{P}$, $\bm{q^{i}_{\mathcal{T}}}:=\begin{bmatrix}x^{i}_{\mathcal{T}} & y^{i}_{\mathcal{T}}\end{bmatrix}^\intercal$ represents the \textit{target position} at which $\pi^{i}$ is planned to reach according to a specific observation.
\end{defn}
\begin{defn}[Equilibrium Position]
	Let $\mathcal{P}$ be the set of all the positioners associated with a telescope. Given a positioner $\pi^{i} \in \mathcal{P}$, $\bm{q^{i}_{\star}}:= \begin{bmatrix}x^{i}_{\star} & y^{i}_{\star}\end{bmatrix}^\intercal$ represents the \textit{equilibrium position} at which $\pi^{i}$ resides at the end of the coordination process.
\end{defn}
\begin{defn}[Index Set]
	Let $\mathcal{N}:=\{\pi^{1}, \cdots, \pi^{n}\}$ be the set of $n$ positioners. Then, $\mathcal{I}_{\mathcal{N}}$ is the \textit{index set} of $\mathcal{N}$ denoting the set of all the indices of the elements of $\mathcal{N}$ as follows
	\begin{equation*}
	\mathcal{I}_{\mathcal{N}} := \{\argA\limits_{k} \pi^{k} | \forall \pi^{k} \in \mathcal{N}\},
	\end{equation*} 
	where $\argA(\cdot)$ operator returns the index of its arguments.
\end{defn}  
Now we define the ``global completeness problem'' as follows.
\begin{plm}[Global Completeness]
	Subject to a set of positioners $\mathcal{P}$ and its corresponding index set $\mathcal{I}_{\mathcal{P}}$, determine whether or not the following relation holds.
	\begin{equation*}
	(\forall k \in \mathcal{I}_{\mathcal{P}}) \bm{q^{k}_{\star}} = \bm{q^{k}_{\mathcal{T}}}
	\end{equation*}
\end{plm}
Because of the dense hexagonal arrangements of positioners in a focal plane, the direct solution to the problem above would be difficult. Instead, we define a local version of the completeness problem, and we show that how the solutions to a set of local completeness problems end up with the solution to the global completeness problem corresponding to them. For this purpose, we first define the notion of "neighborhood" with respect to a specific positioner.
\begin{defn}[Neighborhood]
	Let $\mathcal{P}$ be the set of all the positioners associated with a telescope. Let $\pi^{i} \in \mathcal{P}$ be a positioner. Given $\mathcal{V}^{i}$ denoting the neighboring positioners of $\pi^{i}$, $\mathcal{N}^{i} \subseteq \mathcal{P}$ is the \textit{neighborhood} with respect to $\pi^{i}$ defined as follows
	\begin{equation*}
	\mathcal{N}^{i} := \mathcal{V}^{i} \dot{\bigcup} \{\pi^{i}\}.
	\end{equation*}
\end{defn}
The following definition establishes the ``local completeness problem''.
\begin{plm}[Local Completeness]
	Let $\mathcal{P}$ be the set of all the positioners associated with a telescope. Subject to the neighborhood $\mathcal{N}^{i} \subseteq \mathcal{P}$ with respect to a positioner $\pi^{i} \in \mathcal{P}$, determine whether or not the following holds.
	\begin{equation*}
	(\forall k \in \mathcal{I}_{\mathcal{N}^{i}}) \bm{q^{k}_{\star}} = \bm{q^{k}_{\mathcal{T}}}
	\end{equation*}
\end{plm}  
Using the definition above, we establish the notion of ``completeness relation''
\begin{defn}[Completeness Relation]
	Let $\mathcal{P}$ be the set of all the positioners associated with a telescope. Let also $\mathcal{N}^{i} \subseteq \mathcal{P}$ be a neighborhood with respect to the positioner $\pi^{i} \in \mathcal{P}$. Then, if $\mathcal{N}^{i}$ is locally complete, then the following relation holds.
	\begin{equation*}
	\mathcal{C}(\mathcal{N}^{i})
	\end{equation*}
	As well, given the set of all neighborhoods $\mathcal{N}$ corresponding to positioners set $\mathcal{P}$, if $\mathcal{N}$ is globally complete then $\mathcal{C}(\mathcal{N})$ holds.
\end{defn}
We prove the following property of the completeness relation which is subsequently used to show the relationship between the notions of local and global completeness.
\begin{lem}
	\label{lem:closed}
	Completeness relation is closed under countable union operator.
\end{lem}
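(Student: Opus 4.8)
The plan is to unfold the definition of the completeness relation into its underlying universally quantified statement over index sets, and then to observe that a predicate of that shape is automatically preserved by countable unions. Concretely, I would first make precise what is being claimed: reading $\mathcal{C}(\mathcal{S})$, for a set $\mathcal{S}$ of positioners, as the assertion $(\forall k \in \mathcal{I}_{\mathcal{S}})\ \bm{q^{k}_{\star}} = \bm{q^{k}_{\mathcal{T}}}$ (so that $\mathcal{C}(\mathcal{N}^{i})$ is local completeness and $\mathcal{C}(\bigcup_{i}\mathcal{N}^{i})$ is global completeness), closure under countable union means: whenever $\mathcal{C}(\mathcal{N}^{j})$ holds for every $j$ in a countable index set $J$, then $\mathcal{C}(\mathcal{U})$ holds, where $\mathcal{U} := \bigcup_{j \in J} \mathcal{N}^{j}$.

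Next I would record the elementary set-theoretic fact that the index set of a union is the union of the index sets, i.e. $\mathcal{I}_{\mathcal{U}} = \bigcup_{j \in J} \mathcal{I}_{\mathcal{N}^{j}}$, which is immediate since the index-set operator merely reads off the superscript of each positioner and membership in a union is membership in some member. I would then take an arbitrary $k \in \mathcal{I}_{\mathcal{U}}$; by the previous identity there is some $j_{0} \in J$ with $k \in \mathcal{I}_{\mathcal{N}^{j_{0}}}$. Since $\mathcal{C}(\mathcal{N}^{j_{0}})$ holds, unfolding the Local Completeness problem yields $\bm{q^{k}_{\star}} = \bm{q^{k}_{\mathcal{T}}}$. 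As $k$ was arbitrary, $(\forall k \in \mathcal{I}_{\mathcal{U}})\ \bm{q^{k}_{\star}} = \bm{q^{k}_{\mathcal{T}}}$, which is exactly the condition defining $\mathcal{C}(\mathcal{U})$; hence the relation is closed under countable union.

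I do not expect a genuine obstacle here, since the argument is essentially bookkeeping, but the one point deserving care is that $\bm{q^{k}_{\star}}$ and $\bm{q^{k}_{\mathcal{T}}}$ are attributes of the individual positioner $\pi^{k}$ and hence independent of which neighborhood we view $\pi^{k}$ through; this is what guarantees that no inconsistency can arise when a positioner lies in several of the $\mathcal{N}^{j}$, so that the universal quantifier distributes over the union without conflict. I would close with the remark that, in the physical setting, $\mathcal{P}$ is finite, so the countable case specializes to the finite unions actually used, and that this lemma is precisely the tool needed to derive global completeness of $\mathcal{P}$ from local completeness of all its neighborhoods via $\mathcal{P} = \bigcup_{i} \mathcal{N}^{i}$, the main result of this section.
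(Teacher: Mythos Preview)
Your proof is correct and in fact cleaner than the paper's. The paper argues only the binary case: it takes two locally complete neighborhoods $\mathcal{N}^{i}$, $\mathcal{N}^{j}$ and splits into cases according to whether they are adjacent in the hexagonal layout, reasoning about the $1$ to $3$ positioners they may share in the adjacent case; the extension to arbitrary (finite) unions is then carried out separately by induction in Theorem~\ref{thm:equiv}. Your argument bypasses both the adjacency case analysis and the subsequent induction: once one observes, as you explicitly do, that $\bm{q^{k}_{\star}}$ and $\bm{q^{k}_{\mathcal{T}}}$ are intrinsic attributes of $\pi^{k}$ and do not depend on which neighborhood one views it through, closure under arbitrary countable unions follows immediately from $\mathcal{I}_{\bigcup_{j} \mathcal{N}^{j}} = \bigcup_{j} \mathcal{I}_{\mathcal{N}^{j}}$. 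The paper's geometric case split is physically suggestive but logically inessential for the lemma as stated; what it buys is some intuition about why a shared positioner cannot cause a conflict between overlapping neighborhoods, a point you isolate more abstractly and more precisely as neighborhood-independence of the equilibrium and target attributes.
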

\begin{proof}
	Let $\mathcal{N}^{i}$ and $\mathcal{N}^{j}$ be two locally complete neighborhoods with respect to positioners $\pi^{i}$ and $\pi^{j}$, respectively, i.e., $\mathcal{C}(\mathcal{N}^{i})$ and $\mathcal{C}(\mathcal{N}^{j})$ hold. Then, the following two cases shall be mutual exclusively considered.
	\begin{itemize}
		\item $\mathcal{N}^{i}$ and $\mathcal{N}^{j}$ are not adjacent to each other, say,
		\begin{equation*}
		(\forall \pi \in \mathcal{N}^{i}) \pi \not\in \mathcal{N}^{j}.
		\end{equation*}
		Accordingly, there is no interaction between the quoted neighborhoods. So, the every positioner also reaches its target position after the unification of the neighborhoods. Therefore, the resulting union in complete.
		\item $\mathcal{N}^{i}$ and $\mathcal{N}^{j}$ are adjacent to each other, say,
		\begin{equation*}
		(\exists \pi \in \mathcal{N}^{i}) \pi \in \mathcal{N}^{j}.
		\end{equation*}
		In a hexagonal arrangement of positioners, the minimum and the maximum numbers of the shared positioners\footnote{The maximum number of the shared positioners varies with respect to the lengths of the positioners' arms. The reader finds a thorough analysis of the cited relationship in \cite{horler2018robotic}} between two adjacent neighborhoods are 1 and 3, respectively. Considering the minimum case, let $\pi$ be the shared positioner, so it is the the exclusive positioner which can potentially disturb the overall completeness of $\mathcal{N}^{i}$ and $\mathcal{N}^{j}$. However according to the assumption of the completeness of both neighborhoods, $\pi$ shall reach its target in view of both neighborhoods. Thus, the unification of the neighborhoods is complete. The similar argument is valid to justify the completeness of the unified system of complete neighborhoods where the number of shared events is 2 or 3, as well.
	\end{itemize}
\end{proof}
Finally, the following theorem uses Lemma \ref{lem:closed} to state how the local and the global completeness problems are related to each other. 
\begin{thm}
	\label{thm:equiv}
	Let $\mathcal{N}$ be the set of all neighborhoods to which the positioners of a telescope are assigned. So, if all neighborhoods are locally complete, then the overall system of the positioners is complete, i.e.,
	\begin{equation*}
	[(\forall \mathcal{N}^{i} \in \mathcal{N}) \mathcal{C}(\mathcal{N}^{i})] \Rightarrow \mathcal{C}(\mathcal{N}).
	\end{equation*}
\end{thm}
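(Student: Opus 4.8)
The plan is to reduce the global statement to a finite iteration of Lemma~\ref{lem:closed}. First I would record the two structural facts that make this work. A telescope carries only finitely many positioners, so $\mathcal{P}$ is finite and consequently the collection $\mathcal{N} = \{\mathcal{N}^{1},\dots,\mathcal{N}^{m}\}$ of neighborhoods is finite; moreover every positioner $\pi^{k}\in\mathcal{P}$ lies in its own neighborhood $\mathcal{N}^{k}$, since $\mathcal{N}^{k} = \mathcal{V}^{k}\,\dot{\bigcup}\,\{\pi^{k}\}$. Hence $\bigcup_{\mathcal{N}^{i}\in\mathcal{N}}\mathcal{N}^{i} = \mathcal{P}$: the family of all neighborhoods is a finite cover of the positioner set, and the ``global'' object $\mathcal{N}$ whose completeness we must establish is exactly this union.

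Next I would run the induction. Enumerate the neighborhoods as $\mathcal{N}^{1},\dots,\mathcal{N}^{m}$ and set $\mathcal{U}_{1} := \mathcal{N}^{1}$ and $\mathcal{U}_{r+1} := \mathcal{U}_{r}\,\bigcup\,\mathcal{N}^{r+1}$. By hypothesis $\mathcal{C}(\mathcal{N}^{i})$ holds for every $i$; in particular $\mathcal{C}(\mathcal{U}_{1})$ holds. Assuming $\mathcal{C}(\mathcal{U}_{r})$, Lemma~\ref{lem:closed} applied to the pair $\mathcal{U}_{r}$ and $\mathcal{N}^{r+1}$ --- both complete --- yields $\mathcal{C}(\mathcal{U}_{r+1})$. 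After $m-1$ steps we obtain $\mathcal{C}(\mathcal{U}_{m})$, and since $\mathcal{U}_{m} = \bigcup_{i=1}^{m}\mathcal{N}^{i} = \mathcal{P}$ this is precisely $\mathcal{C}(\mathcal{N})$, the asserted global completeness. Because $m$ is finite, invoking closure under \emph{countable} union is strictly more than we need, but it is exactly the property Lemma~\ref{lem:closed} supplies, so the argument goes through verbatim (and would still go through in a hypothetical infinite focal plane).

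The step I expect to be the real content is the hand-off in Lemma~\ref{lem:closed}: one must be certain that completeness of $\mathcal{U}_{r}$ together with completeness of $\mathcal{N}^{r+1}$ genuinely forces completeness of their union, i.e. that the coordination dynamics on the merged region still drives every positioner to its target. The only positioners whose behavior can change upon merging are those shared between $\mathcal{U}_{r}$ and $\mathcal{N}^{r+1}$, and the hexagonal geometry bounds their number per adjacency (between $1$ and $3$, as in the proof of Lemma~\ref{lem:closed}); a shared positioner is required to converge within each of the two pieces, hence it converges in the union, and no unshared positioner experiences any new interaction. A secondary point I would make explicit is the identification $\mathcal{N}\equiv\bigcup_{i}\mathcal{N}^{i}\equiv\mathcal{P}$: the Global Completeness problem demands $\bm{q^{k}_{\star}} = \bm{q^{k}_{\mathcal{T}}}$ for all $k\in\mathcal{I}_{\mathcal{P}}$, and since the neighborhoods cover $\mathcal{P}$, completeness of the union is by definition exactly this condition, so $\mathcal{C}(\mathcal{U}_{m})$ is $\mathcal{C}(\mathcal{N})$.
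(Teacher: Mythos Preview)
Your proposal is correct and follows essentially the same route as the paper: an induction on the number of neighborhoods, with Lemma~\ref{lem:closed} supplying the inductive step that adjoining one more complete neighborhood to a complete union preserves completeness. You are more explicit than the paper about finiteness of $\mathcal{N}$, the covering identity $\bigcup_{i}\mathcal{N}^{i}=\mathcal{P}$, and the fact that Lemma~\ref{lem:closed} is being invoked on a pair in which one member is itself a union rather than a single neighborhood, but the argument is otherwise the same.
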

\begin{proof}
	By induction, we show that the proof is a consequence of Lemma\ref{lem:closed}. In particular, let $k$ be the number of the neighborhoods. Then, we have
	\begin{itemize}
		\item base case: $k = 1$, say, the positioners set includes only one (complete) neighborhood. So, the overall system is obviously complete.
		\item induction step: suppose the system with $k=n$ is complete, i.e., $\mathcal{C}(\bigcup\limits_{i=1}^{n}\mathcal{N}^{i})$ holds. We show that the system with $k = n+1$ has to be complete. In particular given complete neighborhood $\mathcal{N}^{n+1}$, since the completeness relation is closed under countable union operator (see, Lemma \ref{lem:closed}), hence we have
		\begin{equation*}
		\begin{aligned}
		\mathcal{C}(\bigcup\limits_{i=1}^{n}\mathcal{N}^{i}) \cup \mathcal{C}(\mathcal{N}^{n+1}) &= \mathcal{C}(\bigcup\limits_{i=1}^{n}\mathcal{N}_{i} \cup \mathcal{N}^{n+1})  \\
		&= \mathcal{C}(\bigcup\limits_{i=1}^{n+1}\mathcal{N}^{i}),
		\end{aligned}
		\end{equation*}
		which concludes the global completeness of the positioners set. 
	\end{itemize}
\end{proof}
we later (see, Sec. \ref{sec:analysis}) analyze the completeness condition for local systems. Thanks to the result of Theorem \ref{thm:equiv}, if the conditions corresponding to the completeness of all of the neighborhoods are hold, then the global system is also complete. In the next section, we rewrite the definitions of the local and the global completeness problems in the language of artificial potential fields (APFs). Then, we revise the formulation of the decentralized navigation function, used in priority-based coordination method \cite{tao2018priority}. So, the equilibrium of the new APF could represent the complete result of a coordination process. We then uncover the condition for the existence of a solution to the local completeness problem. 
\section{Cooperative Artificial Potential Fields}
\label{sec:CAPF}
In this section, we first explain our motivation to define a new type of APFs, called ``cooperative artificial potential field'' (CAPF). In particular, we elaborate on the effect of an APF on the completeness of the coordination process. In particular, Sec. \ref{subsec:motiv} clarifies our angle of attack to tackle the completeness problem. Then, we formally introduce our proposed CAPF in Sec. \ref{subsec:formul}. We also reformulate the local and the global completeness problems using the notion of CAPF.
\subsection{Motivation}
\label{subsec:motiv}
The priority-based algorithm \cite{tao2018priority} uses a two-layer competitive architecture to solve the coordination problem, as depicted in Fig. \ref{fig:4}.
\begin{figure}%[tb]
	\centering
	\includegraphics[scale=1.3]{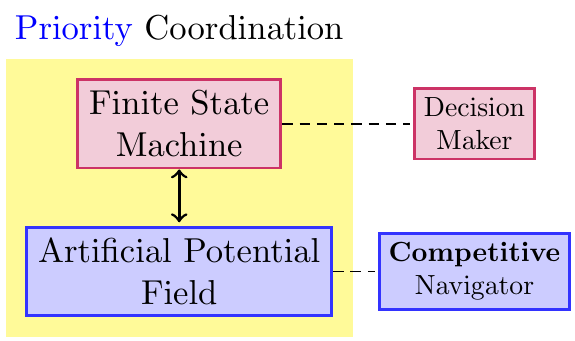}
	\caption{The competitive control architecture of the priority-based coordination}
	\label{fig:4}
\end{figure}

Let $\lambda_{1}$ and $\lambda_{2}$ be positive constant weighting factors. Let also $D$ be the radius of the collision avoidance envelope in which the repulsive term is activated. $d$ also represents the radius of the safety region around each positioner. Then, we define
\begin{subequations}
	\begin{align*}
	\bm{\lambda_{1}} := \lambda_{1} \otimes \mathbb{I}_{2},\\
	\bm{\z{\lambda_{2}}} := \z{\lambda_{2}} \otimes \mathbb{I}_{2}.
	\end{align*}
\end{subequations}
Thus, the definition of the reference APF used in \cite{tao2018priority} is represented as follows
\begin{equation}
\label{eq:typical_apf}
\psi(\bm{q^{i}}) := \underbrace{\vphantom{\bm{\lambda_{2}}\displaystyle\sum\limits_{\mathclap{j \in \mathcal{I}_{\mathcal{N}^{i}}\setminus \{i\}}} \min [0,\frac{\norm{\bm{q^{i}} - \bm{q^{j}}}^{2} - D^{2}}{\norm{\bm{q^{i}} - \bm{q^{j}}}^{2} - d^{2}}}\bm{\lambda_{1}}\norm{\bm{q^{i}} - \bm{q^{i}_{\mathcal{T}}}}^{2}}_{\text{attractive term}} + \underbrace{\bm{\lambda_{2}}\displaystyle\sum\limits_{\mathclap{j \in \mathcal{I}_{\mathcal{N}^{i}}\setminus \{i\}}} \min~ \left[\bm{0},\frac{\norm{\bm{q^{i}} - \bm{q^{j}}}^{2} - D^{2}}{\norm{\bm{q^{i}} - \bm{q^{j}}}^{2} - d^{2}}\right]}_{\text{repulsive term}}.
\end{equation} 		
One notes that the attractive term above exclusively takes the convergence of the positioner $\pi^{i}$ into account. So, the APFs corresponding to different positioners in fact compete with each other because each artificial potential field only cares about the convergence of its own affiliated positioner. Since a positioner does not care about the convergence of its neighbors, this competitive manner of navigation potentially gives rise to the incomplete coordination of the overall system of positioners.

Instead, we propose a cooperative scheme based on which each positioner not only seeks its own convergence, but also cares about the convergence of its neighboring counterparts. Thus, the competitive architecture can be modified based on this cooperative perspective as depicted in Fig. \ref{fig:5}.	
\begin{figure}%[b]
	\centering
	\includegraphics[scale=1.3]{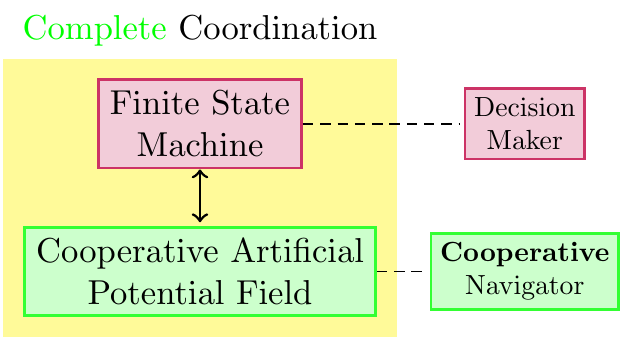}
	\caption{The cooperative control architecture of the complete coordination}
	\label{fig:5}
\end{figure}	
\subsection{Formulation}
\label{subsec:formul}
We embed a particular attractive term in the definition of the reference APF (see, Eq. (\ref{eq:typical_apf})) to realize the cooperation between neighboring positioners to reach collective convergence to their target spots. Let $\lambda_{3}$ be a positive weighting factor corresponding to the cooperative attractive term. Let also $\bm{q^{i}_{\mathcal{T}}}$ (resp. $\bm{q^{j}_{\mathcal{T}}}$) be the target position of $\bm{q^{i}}$ (resp. $\bm{q^{j}}$). Then considering
\begin{equation*}
\bm{\lambda_{3}} := \lambda_{3} \otimes \mathbb{I}_{2},
\end{equation*}
we define a CAPF into which a cooperative attractive term is integrated as follows.
\begin{equation}
\label{eq:CAPF}
\xi(\bm{q^{i}}) := \underbrace{\vphantom{\bm{\lambda_{2}}\displaystyle\sum\limits_{j \in \mathcal{I}_{\mathcal{N}^{i}}\setminus \{i\}} \min [0,\frac{\norm{\bm{q^{i}} - \bm{q^{j}}}^{2} - D^{2}}{\norm{\bm{q^{i}} - \bm{q^{j}}}^{2} - d^{2}}}\bm{\lambda_{1}}\norm{\bm{q^{i}} - \bm{q^{i}_{\mathcal{T}}}}^{2}}_{\text{attractive term}} + \underbrace{\bm{\lambda_{3}}\displaystyle\sum\limits_{\mathclap{j \in \mathcal{I}_{\mathcal{N}^{i}}\setminus \{i\}}}\norm{\bm{q^{j}} - \bm{q^{j}_{\mathcal{T}}}}^{2}}_{\text{cooperative attractive term}} +  \underbrace{\bm{\lambda_{2}}\displaystyle\sum\limits_{\mathclap{j \in \mathcal{I}_{\mathcal{N}^{i}}\setminus \{i\}}} \min~ \left[\bm{0},\frac{\norm{\bm{q^{i}} - \bm{q^{j}}}^{2} - D^{2}}{\norm{\bm{q^{i}} - \bm{q^{j}}}^{2} - d^{2}}\right]}_{\text{repulsive term}}
\end{equation}   	
The cooperative attractive term inserts extra dynamics to the reference APF to involve all positioners of a neighborhood in the convergence process. A rough guideline to set the value of $\lambda_{3}$ is $\lambda_{3} < \lambda_{1}$ for two reasons. First, each CAPF instance should mainly focus on the convergence of its corresponding positioner. So, one selects a larger weighting factor for the main positioner to insure that the main portion of the attractive force of its corresponding CAPF comes from that positioner. Second, $\lambda_{3}$ in fact injects the velocity profile of the neighboring positioners to that of the main positioner. Any large values corresponding to those velocity profiles may give rise to abrupt motions imposed to the main positioner. Such unwanted and uncontrolled motions may not only damage the main positioner's actuators but also leave it vulnerable to potential collisions.  	

We are interested in the conditions based on which a solution to a specific coordination problem is complete. Thus, we formulate the local and the global completeness problems in the language of CAPF. In particular, the equilibrium points for all positioners in a neighborhood shall be their target points. Since the positioners exclusively stop moving at their target points, one needs to obtain the equilibrium points corresponding to the derivative of CAPF as follows.	
\begin{equation}
\hspace*{-2cm}
\label{eq:nabla-xi}
\nabla \xi(\bm{q^{i}}) = 
\begin{cases}
2\bm{\lambda_{1}}(\bm{q^{i}} - \bm{q^{i}_{\mathcal{T}}}) + 2\bm{\lambda_{3}}\displaystyle\sum\limits_{\mathclap{j \in \mathcal{I}_{\mathcal{N}^{i}}\setminus \{i\}}}(\bm{q^{j}} - \bm{q^{j}_{\mathcal{T}}}) & (\forall j \in \mathcal{I}_{\mathcal{N}^{i}}\setminus \{i\}) \norm{\bm{q^{i}} - \bm{q^{j}}} \ge D\\
2\bm{\lambda_{1}}(\bm{q^{i}} - \bm{q^{i}_{\mathcal{T}}}) + 2\bm{\lambda_{3}}\displaystyle\sum\limits_{\mathclap{j \in \mathcal{I}_{\mathcal{N}^{i}}\setminus \{i\}}}(\bm{q^{j}} - \bm{q^{j}_{\mathcal{T}}}) + 2\bm{\lambda_{2}}\displaystyle\sum\limits_{\mathclap{j \in \mathcal{I}_{\mathcal{N}^{i}}\setminus \{i\}}}~~~\frac{(D^{2}-d^{2})(\bm{q^{i}} - \bm{q^{j}})}{\bigl(\norm{\bm{q^{i}} - \bm{q^{j}}}^{2} - d^{2}\bigr)^{2}} & (\exists j \in \mathcal{I}_{\mathcal{N}^{i}}\setminus \{i\}) \norm{\bm{q^{i}} - \bm{q^{j}}} < D
\end{cases}
\end{equation}
According to the forward kinematic model of a typical positioner, i.e., (\ref{eq:kin}), and the CAPF defined in (\ref{eq:CAPF}), the control law below is proposed to be applied to the joints of the positioner $\pi^{i}$. 
\begin{equation*}
\bm{u^{i}} := -\nabla_{\theta_{i},\phi_{i}}\xi(\bm{q^{i}})
\end{equation*}	
Now we can compose the CAPF-driven formalism of the local and the global completeness problems as follows.	
\begin{plm}[Local Completeness (CAPF Derivation)]
	\label{plm:loc-nav}
	Let $\mathcal{N}^{i}$ be a neighborhood with respect to the positioner $\pi^{i}$ where $2 \le |\mathcal{N}^{i}| \le 7$. Then, the neighborhood is locally complete coordinated by a set of CAPFs if the following differential equations are simultaneously solvable.
	\begin{equation*}
	\label{eq:loc_cond}
	\nabla \xi(\bm{q^{i}}) = 0 ~~~ \text{for}~~~ 1 \le i \le |\mathcal{N}^{i}|
	\end{equation*}
\end{plm}
The global completeness problem is the generalization of the local completeness problem above as below. 	
\begin{plm}[Global Completeness (CAPF Derivation)]
	Let $\mathcal{P}$ be the set of all positioners of a telescope. Then, the overall system is globally complete coordinated by a set of CAPFs if the following differential equations are simultaneously solvable.
	\begin{equation*}
	\label{eq:glob_cond}
	\nabla \xi(\bm{q^{i}}) = 0 ~~~ \text{for}~~~ 1 \le i \le |\mathcal{P}|
	\end{equation*}
\end{plm}	
\begin{rem}
	Each CAPF has one exclusive minimum \z{because it is a smooth Morse function \cite{milnor1963morse}, it is uniformly maximal on boundaries of a free space, and it has a unique minimum at a target point in its free space \cite{makarem2015decentralized}}. Then, if Eq. (\ref{eq:loc_cond}) (resp., Eq. (\ref{eq:glob_cond})) is solvable, then its solution is essentially $\bm{q_{\mathcal{T}}} := [\bm{q^{1}_{\mathcal{T}}} ~\dots~ \bm{q^{|\mathcal{N}^{i}|}_{\mathcal{T}}}]^\intercal$(resp., $\bm{q_{\mathcal{T}}} := [\bm{q^{1}_{\mathcal{T}}} ~\dots~ \bm{q^{|\mathcal{P}|}_{\mathcal{T}}}]^\intercal$).
\end{rem}	
In the next section, we find the conditions for guaranteed solvability of the local and the global completeness problems.
\section{Completeness Analysis}
\label{sec:analysis}
The preceding section revealed that the solutions to all local completeness problems are the keys to determine whether the global completeness problem corresponding to those problems is solvable. Here Sec. \ref{subsec:compCond} focuses on the required condition for the completeness of a neighborhood. Then, Sec. \ref{subsec:compSeek} discusses a procedure based on which completeness is sought regarding a system of positioners which is not complete according to a particular set of parameters. 
\subsection{Completeness Condition}
\label{subsec:compCond}	
We take a typical isolated neighborhood with the maximum number of positioners, say, $\{\pi^{i}|0 \le i \le 6\}$. We also consider the maximum contention between the positioners of the neighborhood. In particular, we assume that two neighboring positioners, e.g., $\pi^{1}$ and $\pi^{2}$, are at the collision zone of the central positioner, i.e., $\pi^{0}$. The remaining four positioners are assumed to be residing at each other's collision zones in a pair-wise manner, say, $\pi^{3}$ and $\pi^{4}$, and $\pi^{5}$ and $\pi^{6}$. Fig. \ref{fig:6} represents the configuration of the neighborhood, in which the regions with the same color correspond to those positioners which are suspected to collide and to block each other's movements. This scenario is the most collision-susceptible case to reach the full completeness for the explained neighborhood.  	
\begin{figure}%[tb]
	\centering
	\includegraphics[scale=0.4]{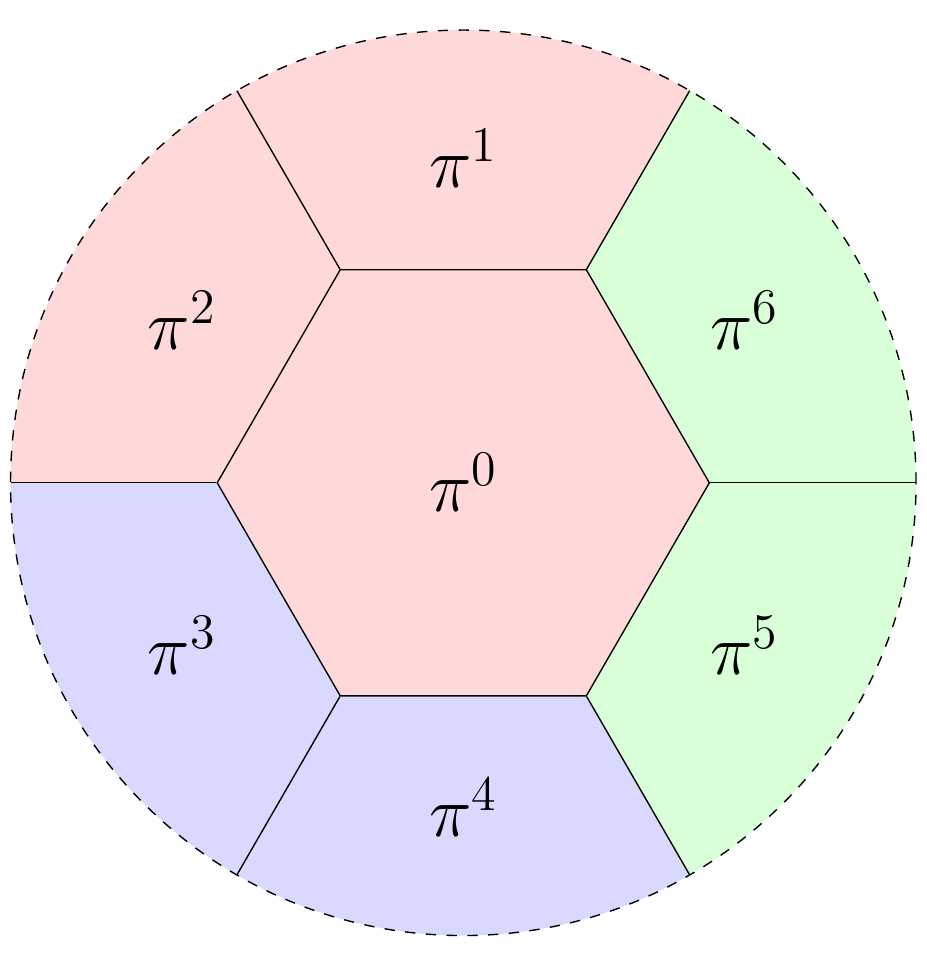}
	\caption{The arrangement of positioners in a typical neighborhood subject to the maximum contention (The regions with the same color correspond to those positioners which are suspected to collide and to block each other's movements.)}
	\label{fig:6}
\end{figure}

According to Problem \ref{plm:loc-nav}, we need to find the solutions which simultaneously fulfill the following set of equations.	
\begin{equation}
\label{eq:reference-eq}
\nabla \xi(\bm{q^{i}}) = \bm{0} ~~~ \text{for} ~~~ 0 \le i \le 6
\end{equation}	
Using Eq. (\ref{eq:nabla-xi}), we expand Eq. (\ref{eq:reference-eq}) as follows.	
	\begin{align*}
	\nabla \xi(\bm{q^{0}}) &= 2\bm{\lambda_{1}}(\bm{q^{0}} - \bm{q^{0}}_{\mathcal{T}}) + 2\bm{\lambda_{3}}\displaystyle\sum\limits_{\mathclap{j \in \{1,2\}}}(\bm{q^{j}} - \bm{q^{j}_{\mathcal{T}}}) + 2\bm{\lambda_{2}}(D^{2} - d^{2})\displaystyle\sum\limits_{\mathclap{j \in \{1,2\}}}~\frac{\bm{q^{0}} - \bm{q^{j}}}{{\bigl(\norm{\bm{q^{0}} - \bm{q^{j}}}^{2} - d^{2}\bigr)^{2}}} = \bm{0}	\\
	\nabla \xi(\bm{q^{1}}) &= 2\bm{\lambda_{1}}(\bm{q^{1}} - \bm{q^{1}_{\mathcal{T}}}) + 2\bm{\lambda_{3}}\displaystyle\sum\limits_{\mathclap{j \in \{0,2\}}}(\bm{q^{j}} - \bm{q^{j}_{\mathcal{T}}}) + 2\bm{\lambda_{2}}(D^{2} - d^{2})\displaystyle\sum\limits_{\mathclap{j \in \{0,2\}}}~\frac{\bm{q^{1}} - \bm{q^{j}}}{{\bigl(\norm{\bm{q^{1}} - \bm{q^{j}}}^{2} - d^{2}\bigr)^{2}}} = \bm{0} \\
	\nabla \xi(\bm{q^{2}}) &= 2\bm{\lambda_{1}}(\bm{q^{2}} - \bm{q^{2}_{\mathcal{T}}}) + 2\bm{\lambda_{3}}\displaystyle\sum\limits_{\mathclap{j \in \{0,1\}}}(\bm{q^{j}} - \bm{q^{j}_{\mathcal{T}}}) + 2\bm{\lambda_{2}}(D^{2} - d^{2})\displaystyle\sum\limits_{\mathclap{j \in \{0,1\}}}~\frac{\bm{q^{\z{2}}} - \bm{q^{j}}}{{\bigl(\norm{\bm{q^{2}} - \bm{q^{j}}}^{2} - d^{2}\bigr)^{2}}}  = \bm{0} \\
	\nabla \xi(\bm{q^{3}}) &= 2\bm{\lambda_{1}}(\bm{q^{3}} - \bm{q^{3}_{\mathcal{T}}}) + 2\bm{\lambda_{3}}(\bm{q^{4}} - \bm{q^{4}_{\mathcal{T}}}) + 2\bm{\lambda_{2}}(D^{2} - d^{2})\dfrac{\bm{q^{3}} - \bm{q^{4}}}{{\bigl(\norm{\bm{q^{3}} - \bm{q^{4}}}^{2} - d^{2}\bigr)^{2}}}  = \bm{0} \\
	\nabla \xi(\bm{q^{4}}) &= 2\bm{\lambda_{1}}(\bm{q^{4}} - \bm{q^{4}_{\mathcal{T}}}) + 2\bm{\lambda_{3}}(\bm{q^{3}} - \bm{q^{3}_{\mathcal{T}}}) + 2\bm{\lambda_{2}}(D^{2} - d^{2})\dfrac{\bm{q^{4}} - \bm{q^{3}}}{{\bigl(\norm{\bm{q^{4}} - \bm{q^{3}}}^{2} - d^{2}\bigr)^{2}}}  = \bm{0} \\
	\nabla \xi(\bm{q^{5}}) &= 2\bm{\lambda_{1}}(\bm{q^{5}} - \bm{q^{5}_{\mathcal{T}}}) + 2\bm{\lambda_{3}}(\bm{q^{6}} - \bm{q^{6}_{\mathcal{T}}}) + 2\bm{\lambda_{2}}(D^{2} - d^{2})\dfrac{\bm{q^{5}} - \bm{q^{6}}}{{\bigl(\norm{\bm{q^{5}} - \bm{q^{6}}}^{2} - d^{2}\bigr)^{2}}}  = \bm{0} \\
	\nabla \xi(\bm{q^{6}}) &= 2\bm{\lambda_{1}}(\bm{q^{6}} - \bm{q^{6}_{\mathcal{T}}}) + 2\bm{\lambda_{3}}(\bm{q^{5}} - \bm{q^{5}_{\mathcal{T}}}) + 2\bm{\lambda_{2}}(D^{2} - d^{2})\dfrac{\bm{q^{6}} - \bm{q^{5}}}{{\bigl(\norm{\bm{q^{6}} - \bm{q^{5}}}^{2} - d^{2}\bigr)^{2}}}  = \bm{0}
	\end{align*}
To compact the set of equations above, we define the following auxiliary function	
\begin{equation*}
\mathcal{Q}(\bm{q^{\alpha}},\bm{q^{\beta}}) := \dfrac{\bm{q^{\alpha}} - \bm{q^{\beta}}}{{\bigl(\norm{\bm{q^{\alpha}} - \bm{q^{\beta}}}^{2} - d^{2}\bigr)^{2}}},
\end{equation*} 	
and the constant parameter below	
\begin{equation*}
\omega := D^{2} - d^{2},
\end{equation*}	
which yield	
	\begin{align*}
	\nabla \xi(\bm{q^{0}}) &= 2\begin{bmatrix}
		\bm{\lambda_{1}} & \bm{\lambda_{3}} & \bm{\lambda_{3}}
	\end{bmatrix}\begin{bmatrix}
		\bm{q^{0}} & \bm{q^{1}} & \bm{q^{2}}
	\end{bmatrix}^\intercal + 2\omega\bm{\lambda_{2}}\bigl(\mathcal{Q}(\bm{q^{0}},\bm{q^{1}})+\mathcal{Q}(\bm{q^{0}},\bm{q^{2}})\bigr) - 2\bigl(\bm{\lambda_{1}}\bm{q^{0}_{\mathcal{T}}} + \bm{\lambda_{3}}(\bm{q^{1}_{\mathcal{T}}} + \bm{q^{2}_{\mathcal{T}}})\bigr) = \bm{0},\\
	\nabla \xi(\bm{q^{1}}) &= 2\begin{bmatrix}
		\bm{\lambda_{3}} & \bm{\lambda_{1}} & \bm{\lambda_{3}}
	\end{bmatrix}\begin{bmatrix}
		\bm{q^{0}} & \bm{q^{1}} & \bm{q^{2}}
	\end{bmatrix}^\intercal + 2\omega\bm{\lambda_{2}}\bigl(\mathcal{Q}(\bm{q^{1}},\bm{q^{0}})+\mathcal{Q}(\bm{q^{1}},\bm{q^{2}})\bigr) - 2\bigl(\bm{\lambda_{1}}\bm{q^{1}_{\mathcal{T}}} + \bm{\lambda_{3}}(\bm{q^{0}_{\mathcal{T}}} + \bm{q^{2}_{\mathcal{T}}})\bigr) = \bm{0},\\
	\nabla \xi(\bm{q^{2}}) &= 2\begin{bmatrix}
		\bm{\lambda_{3}} & \bm{\lambda_{3}} & \bm{\lambda_{1}}
	\end{bmatrix}\begin{bmatrix}
		\bm{q^{0}} & \bm{q^{1}} & \bm{q^{2}}
	\end{bmatrix}^\intercal + 2\omega\bm{\lambda_{2}}\bigl(\mathcal{Q}(\bm{q^{2}},\bm{q^{0}})+\mathcal{Q}(\bm{q^{2}},\bm{q^{1}})\bigr) - 2\bigl(\bm{\lambda_{1}}\bm{q^{2}_{\mathcal{T}}} + \bm{\lambda_{3}}(\bm{q^{0}_{\mathcal{T}}} + \bm{q^{1}_{\mathcal{T}}})\bigr) = \bm{0},\\
	\nabla \xi(\bm{q^{3}}) &= 2\begin{bmatrix}
		\bm{\lambda_{1}} & \bm{\lambda_{3}}
	\end{bmatrix}\begin{bmatrix}
		\bm{q^{3}} & \bm{q^{4}}
	\end{bmatrix}^\intercal + 2\omega\bm{\lambda_{2}}\mathcal{Q}(\bm{q^{3}},\bm{q^{4}}) - 2\bigl(\bm{\lambda_{1}}\bm{q^{3}_{\mathcal{T}}} + \bm{\lambda_{3}}\bm{q^{4}_{\mathcal{T}}}\bigr) = \bm{0},\\
	\nabla \xi(\bm{q^{4}}) &= 2\begin{bmatrix}
		\bm{\lambda_{3}} & \bm{\lambda_{1}}
	\end{bmatrix}\begin{bmatrix}
		\bm{q^{3}} & \bm{q^{4}}
	\end{bmatrix}^\intercal + 2\omega\bm{\lambda_{2}}\mathcal{Q}(\bm{q^{4}},\bm{q^{3}}) - 2\bigl(\bm{\lambda_{1}}\bm{q^{4}_{\mathcal{T}}} + \bm{\lambda_{3}}\bm{q^{3}_{\mathcal{T}}}\bigr) = \bm{0},\\
	\nabla \xi(\bm{q^{5}}) &= 2\begin{bmatrix}
		\bm{\lambda_{1}} & \bm{\lambda_{3}}
	\end{bmatrix}\begin{bmatrix}
		\bm{q^{5}} & \bm{q^{6}}
	\end{bmatrix}^\intercal + 2\omega\bm{\lambda_{2}}\mathcal{Q}(\bm{q^{5}},\bm{q^{6}}) - 2\bigl(\bm{\lambda_{1}}\bm{q^{5}_{\mathcal{T}}} + \bm{\lambda_{3}}\bm{q^{6}_{\mathcal{T}}}\bigr) = \bm{0},\\
	\nabla \xi(\bm{q^{6}}) &= 2\begin{bmatrix}
		\bm{\lambda_{3}} & \bm{\lambda_{1}}
	\end{bmatrix}\begin{bmatrix}
		\bm{q^{5}} & \bm{q^{6}}
	\end{bmatrix}^\intercal + 2\omega\bm{\lambda_{2}}\mathcal{Q}(\bm{q^{6}},\bm{q^{5}}) - 2\bigl(\bm{\lambda_{1}}\bm{q^{6}_{\mathcal{T}}} + \bm{\lambda_{3}}\bm{q^{5}_{\mathcal{T}}}\bigr) = \bm{0}.
	\end{align*}
The equations set above can be written as follows	
\begin{equation*}
\hspace*{-3cm}
\label{eq:vec1} 
\underbrace{\begin{bmatrix}
		\nabla \xi(\bm{q^{0}}) \\ \nabla \xi(\bm{q^{1}}) \\ \nabla \xi(\bm{q^{2}}) \\ \nabla \xi(\bm{q^{3}}) \\	\nabla \xi(\bm{q^{4}}) \\ \nabla \xi(\bm{q^{5}}) \\ \nabla \xi(\bm{q^{6}}) 
\end{bmatrix}}_{\textstyle\mathlarger{\mathlarger{\mathlarger{\nabla \xi(\bm{q})}}}} = 
\underbrace{\begin{bmatrix}
		2\bm{\lambda_{1}}&2\bm{\lambda_{3}}&2\bm{\lambda_{3}} &\bm{0}&\bm{0}&\bm{0}&\bm{0}\\
		2\bm{\lambda_{3}}&2\bm{\lambda_{1}}&2\bm{\lambda_{3}} &\bm{0}&\bm{0}&\bm{0}&\bm{0}\\
		2\bm{\lambda_{3}}&2\bm{\lambda_{3}}&2\bm{\lambda_{1}} &\bm{0}&\bm{0}&\bm{0}&\bm{0}\\
		\bm{0}&\bm{0}&\bm{0}&2\bm{\lambda_{1}}&2\bm{\lambda_{3}}&\bm{0}&\bm{0}\\
		\bm{0}&\bm{0}&\bm{0}&2\bm{\lambda_{3}}&2\bm{\lambda_{1}}&\bm{0}&\bm{0}\\
		\bm{0}&\bm{0}&\bm{0}&\bm{0}&\bm{0}&2\bm{\lambda_{1}}&2\bm{\lambda_{3}}\\
		\bm{0}&\bm{0}&\bm{0}&\bm{0}&\bm{0}&2\bm{\lambda_{3}}&2\bm{\lambda_{1}}	
\end{bmatrix}}_{\textstyle\mathlarger{\mathlarger{\mathlarger{\bm{\Lambda}}}}}
\underbrace{\begin{bmatrix}
		\bm{q^{0}}\\\bm{q^{1}}\\\bm{q^{2}}\\\bm{q^{3}}\\\bm{q^{4}}\\\bm{q^{5}}\\\bm{q^{6}}
\end{bmatrix}}_{\textstyle\mathlarger{\mathlarger{\mathlarger{\bm{q}}}}}
+2\omega\bm{\lambda_{2}}
\underbrace{\begin{bmatrix}
		\mathcal{Q}(\bm{q^{0}},\bm{q^{1}}) + \mathcal{Q}(\bm{q^{0}},\bm{q^{2}})\\
		\mathcal{Q}(\bm{q^{1}},\bm{q^{0}}) + \mathcal{Q}(\bm{q^{1}},\bm{q^{2}})\\
		\mathcal{Q}(\bm{q^{2}},\bm{q^{0}}) + \mathcal{Q}(\bm{q^{2}},\bm{q^{1}})\\
		\mathcal{Q}(\bm{q^{3}},\bm{q^{4}})\\
		\mathcal{Q}(\bm{q^{4}},\bm{q^{3}})\\
		\mathcal{Q}(\bm{q^{5}},\bm{q^{6}})\\
		\mathcal{Q}(\bm{q^{6}},\bm{q^{5}})
\end{bmatrix}}_{\textstyle\mathlarger{\mathlarger{\mathlarger{\bm{\Omega}}}}}		
+
\underbrace{\begin{bmatrix}
		2\bigl(\bm{\lambda_{1}}\bm{q^{0}_{\mathcal{T}}} + \bm{\lambda_{3}}(\bm{q^{1}_{\mathcal{T}}} + \bm{q^{2}_{\mathcal{T}}})\bigr)\\
		2\bigl(\bm{\lambda_{1}}\bm{q^{1}_{\mathcal{T}}} + \bm{\lambda_{3}}(\bm{q^{0}_{\mathcal{T}}} + \bm{q^{2}_{\mathcal{T}}})\bigr)\\
		2\bigl(\bm{\lambda_{1}}\bm{q^{2}_{\mathcal{T}}} + \bm{\lambda_{3}}(\bm{q^{0}_{\mathcal{T}}} + \bm{q^{1}_{\mathcal{T}}})\bigr)\\
		2\bigl(\bm{\lambda_{1}}\bm{q^{3}_{\mathcal{T}}} + \bm{\lambda_{3}}\bm{q^{4}_{\mathcal{T}}}\bigr)\\
		2\bigl(\bm{\lambda_{1}}\bm{q^{4}_{\mathcal{T}}} + \bm{\lambda_{3}}\bm{q^{3}_{\mathcal{T}}}\bigr)\\
		2\bigl(\bm{\lambda_{1}}\bm{q^{5}_{\mathcal{T}}} + \bm{\lambda_{3}}\bm{q^{6}_{\mathcal{T}}}\bigr)\\
		2\bigl(\bm{\lambda_{1}}\bm{q^{6}_{\mathcal{T}}} + \bm{\lambda_{3}}\bm{q^{5}_{\mathcal{T}}}\bigr)
\end{bmatrix}}_{\textstyle\mathlarger{\mathlarger{\mathlarger{\bm{\Theta'}}}}} 
= \bm{0},
\end{equation*}
whose compact form reads
\begin{equation*}
\label{eq:vec1-compact}
\nabla \xi(\bm{q}) = \bm{\Lambda}\bm{q} + 2\omega\bm{\lambda_{2}}\bm{\Omega} + \bm{\Theta'} = \bm{0}.
\end{equation*}
The entries of $\bm{\Omega}$ above include function $\mathcal{Q}(\cdot,\cdot)$ which is nonlinear. We note that both positioners monotonically head to their target points. So as an approximation, we linearize this function at the point whose coordinates are the average of the target positions' coordinates associated with the arguments of the function. Put differently, we linearize $\mathcal{Q}(\bm{q^{\alpha}},\bm{q^{\beta}})$ at $\begin{bmatrix}
	\frac{\bm{q^{\alpha}_{\mathcal{T}}}+\bm{q^{\beta}_{\mathcal{T}}}}{2} & \frac{\bm{q^{\alpha}_{\mathcal{T}}}+\bm{q^{\beta}_{\mathcal{T}}}}{2}\end{bmatrix}^\intercal$ which is the closest point to both positioners. Thus, the Newton method gives the following approximation for $\mathcal{Q}(\cdot,\cdot)$.	
\begin{equation*}
\hspace*{-2cm}
\label{eq:approx}
\mathcal{Q}(\bm{q^{\alpha}},\bm{q^{\beta}}) \approx \mathcal{Q}(\bm{q^{\alpha}_{\mathcal{T}}},\bm{q^{\beta}_{\mathcal{T}}}) + \frac{\partial\mathcal{Q}(\bm{q^{\alpha}},\bm{q^{\beta}})}{\partial \bm{q^{\alpha}}}\bigg\rvert_{\bigl(\frac{\bm{q^{\alpha}_{\mathcal{T}}}+\bm{q^{\beta}_{\mathcal{T}}}}{2},\frac{\bm{q^{\alpha}_{\mathcal{T}}}+\bm{q^{\beta}_{\mathcal{T}}}}{2}\bigr)}\bigl(\bm{q^{\alpha}} - \frac{\bm{q^{\alpha}_{\mathcal{T}}}+\bm{q^{\beta}_{\mathcal{T}}}}{2}\bigr) + \frac{\partial\mathcal{Q}(\bm{q^{\alpha}},\bm{q^{\beta}})}{\partial \bm{q^{\beta}}}\bigg\rvert_{\bigl(\frac{\bm{q^{\alpha}_{\mathcal{T}}}+\bm{q^{\beta}_{\mathcal{T}}}}{2},\frac{\bm{q^{\alpha}_{\mathcal{T}}}+\bm{q^{\beta}_{\mathcal{T}}}}{2}\bigr)}\bigl(\bm{q^{\beta}} - \frac{\bm{q^{\alpha}_{\mathcal{T}}}+\bm{q^{\beta}_{\mathcal{T}}}}{2}\bigr)
\end{equation*}  	
Taking the auxiliary constant parameters below into account	
\begin{subequations}
	\begin{align*}
	&\bm{\overline{\Delta}_{\alpha,\beta}} = \bm{\overline{\Delta}_{\beta,\alpha}} : = \frac{\bm{q^{\alpha}_{\mathcal{T}}}+\bm{q^{\beta}_{\mathcal{T}}}}{2},\\
	&\bm{\Delta^{\alpha}_{\alpha,\beta}} = \bm{\Delta^{\alpha}_{\beta,\alpha}} := \frac{\partial\mathcal{Q}(\bm{q^{\alpha}},\bm{q^{\beta}})}{\partial \bm{q^{\alpha}}}\bigg\rvert_{\bigl(\bm{\overline{\Delta}_{\alpha,\beta}},\bm{\overline{\Delta}_{\alpha,\beta}}\bigr)},\\
	&\bm{\Delta^{\beta}_{\alpha,\beta}} = \bm{\Delta^{\beta}_{\beta,\alpha}} := \frac{\partial\mathcal{Q}(\bm{q^{\alpha}},\bm{q^{\beta}})}{\partial \bm{q^{\beta}}}\bigg\rvert_{\bigl(\bm{\overline{\Delta}_{\alpha,\beta}},\bm{\overline{\Delta}_{\alpha,\beta}}\bigr)},
	\end{align*}
\end{subequations}
(\ref{eq:approx}) is simplified as below	
\begin{equation*}
\label{eq:approx-simplified}
\mathcal{Q}(\bm{q^{\alpha}},\bm{q^{\beta}}) \approx \mathcal{Q}(\bm{q^{\alpha}_{\mathcal{T}}},\bm{q^{\beta}_{\mathcal{T}}}) - \bm{\overline{\Delta}_{\alpha,\beta}}\bigl(\bm{\Delta^{\alpha}_{\alpha,\beta}} + \bm{\Delta^{\beta}_{\alpha,\beta}}\bigr) + \bm{q^{\alpha}}\bm{\Delta^{\alpha}_{\alpha,\beta}} + \bm{q^{\beta}}\bm{\Delta^{\beta}_{\alpha,\beta}}.
\end{equation*}	
Therefore, the linearized version of $\bm{\Omega}$, i.e., $\bm{\Omega^{\star}}$, is obtained as the following:	
\begin{equation*}
\bm{\Omega} \approx \bm{\Omega^{\star}} = 
\underbrace{\begin{bmatrix}
		\bm{\Delta^{0}_{0,1}} + \bm{\Delta^{0}_{0,2}} &\bm{\Delta^{1}_{0,1}}&\bm{\Delta^{2}_{0,2}}&\bm{0}&\bm{0}&\bm{0}&\bm{0}\\
		\bm{\Delta^{0}_{1,0}}&\bm{\Delta^{1}_{1,0}}+\bm{\Delta^{1}_{1,2}}&\bm{\Delta^{2}_{1,2}}&\bm{0}&\bm{0}&\bm{0}&\bm{0}\\
		\bm{\Delta^{0}_{2,0}}&\bm{\Delta^{1}_{2,1}}&\bm{\Delta^{2}_{2,0}}+\bm{\Delta^{2}_{2,1}}&\bm{0}&\bm{0}&\bm{0}&\bm{0}\\
		\bm{0}&\bm{0}&\bm{0}&\bm{\Delta^{3}_{3,4}}&\bm{\Delta^{4}_{3,4}}&\bm{0}&\bm{0}\\
		\bm{0}&\bm{0}&\bm{0}&\bm{\Delta^{3}_{4,3}}&\bm{\Delta^{4}_{4,3}}&\bm{0}&\bm{0}\\
		\bm{0}&\bm{0}&\bm{0}&\bm{0}&\bm{0}&\bm{\Delta^{5}_{5,6}}&\bm{\Delta^{6}_{5,6}}\\
		\bm{0}&\bm{0}&\bm{0}&\bm{0}&\bm{0}&\bm{\Delta^{5}_{6,5}}&\bm{\Delta^{6}_{6,5}}
\end{bmatrix}}_{\textstyle\mathlarger{\mathlarger{\mathlarger{\bm{\Delta}}}}}
\underbrace{\begin{bmatrix}
		\bm{q^{0}}\\\bm{q^{1}}\\\bm{q^{2}}\\\bm{q^{3}}\\\bm{q^{4}}\\\bm{q^{5}}\\\bm{q^{6}}
\end{bmatrix}}_{\textstyle\mathlarger{\mathlarger{\mathlarger{\bm{q}}}}} + \bm{\Theta''},
\end{equation*}	
where	
\begin{equation*}
\bm{\Theta''} = 
\begin{bmatrix}
	\mathcal{Q}(\bm{q^{0}_{\mathcal{T}}},\bm{q^{1}_{\mathcal{T}}})    +     \mathcal{Q}(\bm{q^{0}_{\mathcal{T}}},\bm{q^{2}_{\mathcal{T}}})    -
	\bm{\overline{\Delta}_{0,1}} (\bm{\Delta^{0}_{0,1}} + \bm{\Delta^{1}_{0,1})}    -
	\bm{\overline{\Delta}_{0,2}} (\bm{\Delta^{0}_{0,2}} + \bm{\Delta^{2}_{0,2}})\\
	\mathcal{Q}(\bm{q^{1}_{\mathcal{T}}},\bm{q^{0}_{\mathcal{T}}})    +     \mathcal{Q}(\bm{q^{1}_{\mathcal{T}}},\bm{q^{2}_{\mathcal{T}}})    -
	\bm{\overline{\Delta}_{0,1}} (\bm{\Delta^{1}_{1,0}} + \bm{\Delta^{0}_{1,0})}    -
	\bm{\overline{\Delta}_{1,2}} (\bm{\Delta^{1}_{1,2}} + \bm{\Delta^{2}_{1,2}})\\
	\mathcal{Q}(\bm{q^{2}_{\mathcal{T}}},\bm{q^{0}_{\mathcal{T}}})    +     \mathcal{Q}(\bm{q^{2}_{\mathcal{T}}},\bm{q^{1}_{\mathcal{T}}})    -
	\bm{\overline{\Delta}_{0,2}} (\bm{\Delta^{0}_{2,0}} + \bm{\Delta^{2}_{2,0}})    -
	\bm{\overline{\Delta}_{1,2}} (\bm{\Delta^{1}_{2,1}} + \bm{\Delta^{2}_{2,1}})\\
	\mathcal{Q}(\bm{q^{3}_{\mathcal{T}}},\bm{q^{4}_{\mathcal{T}}})    -
	\bm{\overline{\Delta}_{3,4}} (\bm{\Delta^{3}_{3,4}} + \bm{\Delta^{4}_{3,4}})\\
	\mathcal{Q}(\bm{q^{4}_{\mathcal{T}}},\bm{q^{3}_{\mathcal{T}}})    -
	\bm{\overline{\Delta}_{3,4}} (\bm{\Delta^{3}_{4,3}} + \bm{\Delta^{4}_{4,3}})\\
	\mathcal{Q}(\bm{q^{5}_{\mathcal{T}}}
	,\bm{q^{6}_{\mathcal{T}}})    -
	\bm{\overline{\Delta}_{5,6}} (\bm{\Delta^{5}_{5,6}} + \bm{\Delta^{6}_{5,6}})\\
	\mathcal{Q}(\bm{q^{6}_{\mathcal{T}}},\bm{q^{5}_{\mathcal{T}}})    -
	\bm{\overline{\Delta}_{6,5}} (\bm{\Delta^{5}_{6,5}} + \bm{\Delta^{6}_{6,5}})
\end{bmatrix}.
\end{equation*}	
We replace $\bm{\Omega}$ in Eq. (\ref{eq:vec1-compact}) by its linear approximation $\bm{\Omega^{\star}}$.	
\begin{equation*}
\begin{aligned}
\nabla \xi(\bm{q}) =& \bm{\Lambda}\bm{q} + 2\omega\bm{\lambda_{2}}\bm{\Omega} + \bm{\Theta'} = \bm{0}\\
\approx& \bm{\Lambda}\bm{q} + 2\omega\bm{\lambda_{2}}\bm{\Omega^{\star}} + \bm{\Theta'} = \bm{0}\\
\approx& \bm{\Lambda}\bm{q} + 2\omega\bm{\lambda_{2}}(\bm{\Delta}\bm{q}+\bm{\Theta''}) + \bm{\Theta'} = \bm{0}\\
\approx& \bm{\Lambda}\bm{q} + \underbrace{2\omega\bm{\lambda_{2}}\bm{\Delta}}_{\bm{\Gamma}}\bm{q}+\underbrace{2\omega\bm{\lambda_{2}}\bm{\Theta''} + \bm{\Theta'}}_{\bm{\Theta}} = \bm{0}\\
\end{aligned}
\end{equation*}	
Thus, we end up with 	
\begin{equation}
\label{eq:lin-eq}
(\bm{\Lambda} + \bm{\Gamma})\bm{q} + \bm{\Theta} = \bm{0}.
\end{equation}	
Now, we can analyze the solvability of the local completeness problem based on Eq. (\ref{eq:lin-eq}), called \textit{the completeness equation}. For a system of positioners to be complete, this equation has be solvable, and its solution has to be the target points corresponding to the positioners of the system. In particular, a system is complete if the following equation holds
\begin{equation*}
\bm{q_\mathcal{T}} = -(\bm{\Lambda} + \bm{\Gamma})^{-1}\bm{\Theta};
\end{equation*} 
otherwise, it is incomplete.

The completeness equation asserts that the completeness of a system of positioners depends on the parameters that are set by designers. Thus, modification of those parameters may resolve any potential incompleteness. For this purpose, in the next section we propose a parameter modification process to search for completeness encountering an incomplete system.
\subsection{Completeness Seeking by Parameter Modification}
\label{subsec:compSeek}
As the completeness equation implies, the parameters which shape $\bm{\Lambda}$ and $\bm{\Gamma}$ directly influence on the completeness of system. Strictly speaking, parameter selections may give rise to incompleteness. So, one can take two approaches into account to search for the parameters based on which the system is complete. Considering an incomplete system with respect to a particular parameter specification, we modify entries of $\bm{\Lambda}$ and/or $\bm{\Gamma}$ to search for the other parameter specifications based on which the system is complete. 

Matrix $\bm{\Lambda}$ is structured by the attractive and the cooperative attractive terms of CAPFs. So, if a system is incomplete, one can change the values corresponding the weighting factors of the cited terms. So, the overall summation of the $\bm{\Lambda}$ and $\bm{\Gamma}$ might be invertible. Theoretically, there are infinitely many numbers which can be attributed to the weighting factors. So, there is no upper bound for the number of the possible parameter modifications corresponding to $\bm{\Lambda}$. However, practical requirements constrain the scope of valid selections. For example, large weighting factors strictly increase the velocity of positioners. The resulting high velocities may damage their motors and increase the possibility of collision when the positioners are very close to each other. Thus, a reasonable range for each weighting factor can be determined from which new values are selected to modify the current values.

Matrix $\bm{\Gamma}$ also contributes to the completeness (on incompleteness) of a system based on its parameters. Among all those parameters, the target positions extremely affect on the entries of the matrix. One may note that, the target positions are defined based on each observation. In particular, some algorithms are used to assign an object to each positioner. For example, \cite{morales2011fibre} handles the object-positioner assignments such that the number of the observed objects is maximized. We note that changing the targets assigned to the positioners ends up with a new matrix $\bm{\Gamma}$. So, such a parameter modification may resolve the system incompleteness. In contrast to the $\bm{\Lambda}$ modification, the maximum number of the possible target position modifications is bounded. As already quoted, a specific procedure assigns a target to each positioner according to a particular observation prior to the coordination. In particular, given $n$ objects corresponding to an observation and $m \ge n$ positioners\footnote{\noindent We assume that an observation is planned such that all of its objects could be observed by the positioners set. Thus, the number of the positioners should essentially exceeds that of those objects.}, the number of possible object-positioner assignments is $P(m,n)$. However, every target cannot be observed by every positioner because of the positioners' motion limitations. Another option to modify $\bm{\Gamma}$ would be changing the value of the repulsive weighting factor, i.e., $\lambda_{2}$. However, manipulation of this factor is not recommended because of its critical role in the safety of the system and its performance. In particular, decreasing the factor may jeopardize the full control over movements of postioners when they are close to each other. In contrast, increasing the value of the factor can extremely increase the required time for completion of the coordination process. The explained parameter modification process is schematically illustrated in Fig. \ref{fig:7}.  
\begin{figure}%[bt]
	\hspace*{-5mm}
	\includegraphics[scale=0.7]{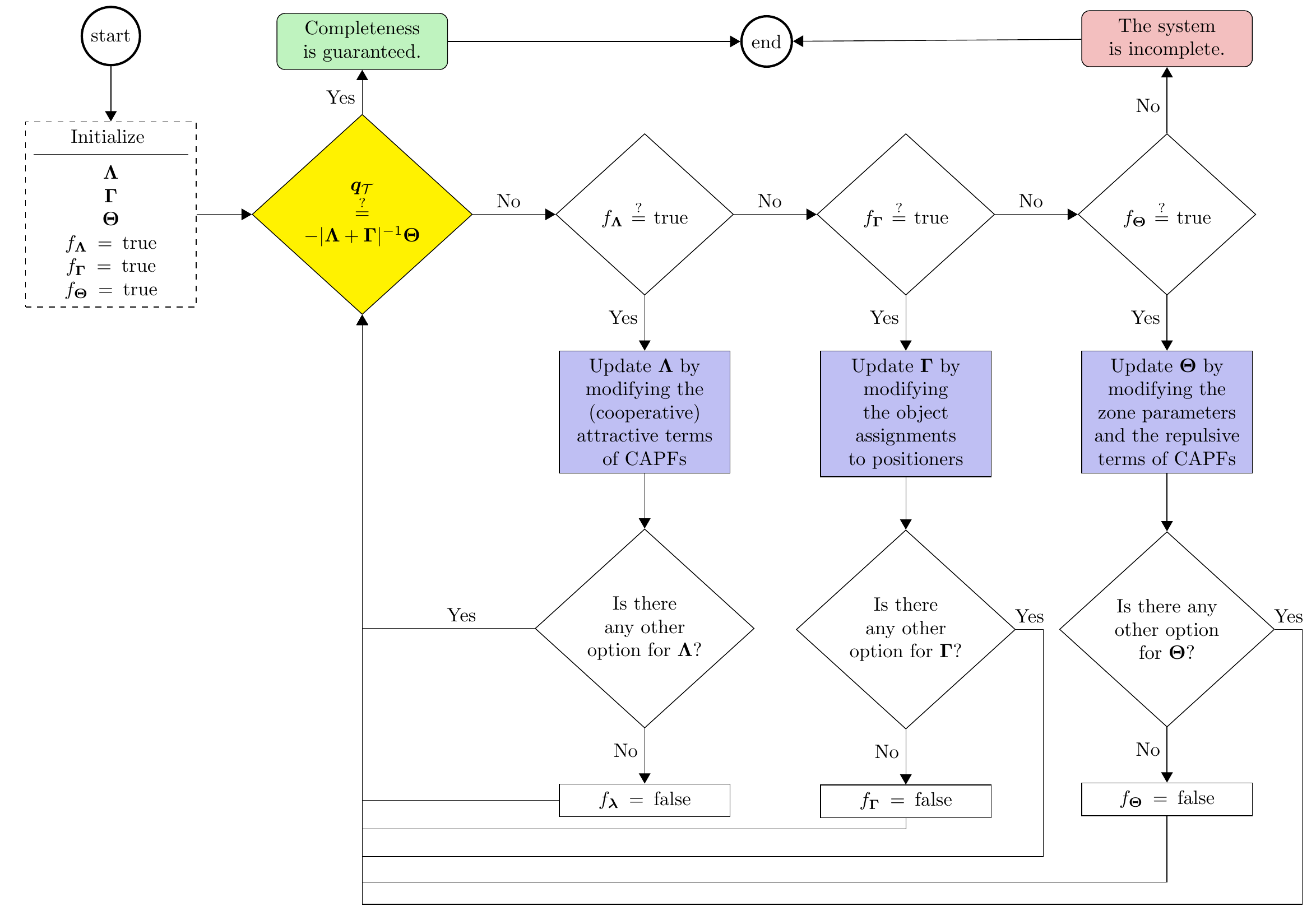}
	\caption{The parameter modification process}
	\label{fig:7}
\end{figure}
\section{Discussion}
\label{sec:diss}	
The computational complexity of the trajectory planning algorithm using the reference APF is $\mathcal{O}(n)$ where $n$ represents the number of the positioners to be coordinated \cite{makarem2014collision}. The substitution of CAPF for APF does not adversely affect the computational complexity of the overall trajectory planning algorithm applied to positioners sets. To be specific, the added cooperative attractive term is a polynomial similar to the attractive term of the algorithm. Thus, the linear-time computational complexity of the algorithm is preserved.

The added cooperative attractive term increases the agility of the movements in the course of coordination. However, this agility has to be compensated and attenuated in practice because abrupt movements of positioners may strengthen the collision possibility when they are close to each other. In other words, the added cooperative attractive term does not necessarily improve the convergence time of the coordination process. Furthermore, in some cases the convergence time might be even longer than that of corresponding to the reference APF. In the case of the reference APF, each positioner stops moving upon reaching its target position. However, in the case of CAPF, a positioner does not necessarily resides at its target spot immediately after reaching it because the cooperative term induces more dynamics to settle the maximum of the neighboring positioners at their target points. Thus, a positioner may temporarily pass its target to open the way for the remainder of its peers to get closer to their targets. This behavior does not give rise to endless oscillations since the high-level decision-making layer in fact handles these kind of scenarios. Thus, using CAPF rises a trade-off between the completeness seeking and potentially longer convergence time. The simulation results of the next section confirms this conclusion.
\section{Simulations}
\label{sec:sim}	
We modify the Python simulator developed in \cite{tao2018priority} according to our contributions. In particular, we substitute the reference APF \cite{tao2018priority} (see, Fig. \ref{fig:4}) with our CAPF. (see, Fig. \ref{fig:5}).

We define two test batches. Each test batch includes \z{six} test scenarios each of which includes a specific number of positioners. Furthermore, each test batch owns a specific set of initial and target points corresponding to its positioners.\footnote{We conduct the tests on a ASUS ZenBook UX410UAR with an Intel Core i7-8550U @ 1.8GHz x 4 processor, Intel UHD Graphics 620 graphic card on an Microsoft Windows 10, 10.0.15063 version.} The full specifications and the resulting number of the converged positioners and the convergence times are reflected in table \ref{tbl:c12}.
\begin{table}
	\hspace*{-3cm}
	\begin{tabular}{cc  cccccc   cccccc }
		\toprule
		\toprule
		&& \multicolumn{6}{c}{\bfseries Test batch 1} &\multicolumn{6}{c}{\bfseries Test batch 2}\\
		\cline{1-14} 
		{\bfseries Total positioners (\#)} 
		&&52&106&234&449&730&980&54&114&250&481&773&1006\\
		\cline{1-14}
		\multirow{2}{1.7cm}{\centering\bfseries Converged positioners (\#)} & APF  &50&90&196&382&621&844&52&105&228&434&675&889\\
		\cline{2-14}
		& CAPF &52&106&234&449&730&980&54&114&\z{250}&481&473&1006\\
		\cline{1-14}
		\multirow{2}{1.7cm}{\centering\bfseries Convergence time (sec.)} & APF &14.8&36.3&89.7&173.2&317.6&386.9&13.9&31.3&85.1&171.2&267&364.6\\
		\cline{2-14}
		& CAPF &16.7&49.3&96.1&199.8&359.4&503.5&14.7&41.8&99.7&194&303.6&547.9\\
		\cline{1-14}
		$\lambda_{1}$ &&1&1&1&1&1&1&1&1&1&1&1&1\\
		$\lambda_{2}$ &&0.05&0.05&0.05&0.05&0.05&0.05&0.05&0.05&0.05&0.05&0.05&0.05\\
		$\lambda_{3}$ (specific to CAPF) &&0.03&0.03&0.03&0.03&\cellcolor{red!25}0.04&\cellcolor{red!25}0.04&0.03&0.03&0.03&0.03&\cellcolor{red!25}0.04&\cellcolor{red!25}0.04\\
		\bottomrule
		\bottomrule
	\end{tabular}
	\caption{The convergence rate and the convergence time corresponding to test batch 1 and test batch 2 (The highlighted entries are the modified \z{values}, so that the completeness conditions associated with their corresponding test cases are satisfied.)}
	\label{tbl:c12}
\end{table}
The graphical representations of the convergence rates and the convergence times corresponding to test batch 1 and test batch 2 are illustrated in Fig. \ref{fig:8} and \ref{fig:9}, respectively.  

We chose $\lambda_{1} = 1$, $\lambda_{2}= 0.05$, and $\lambda_{3}= 0.03$ for our tests. However, these values do not fulfill the completeness condition corresponding to some of the test cases. So, we used the parameter modification procedure as explained in Sec. \ref{subsec:compSeek}. In particular, the $5$\textsuperscript{th} and the $6$\textsuperscript{th} test cases of both the test batches cannot be completely coordinated by the quoted weighting factor parameters. These cases are highlighted in the last row of table \ref{tbl:c12}. Thus, we modified $\lambda_{3}$ value which ended up with the complete coordinations in those cases. 

The results witness the completeness of the considered test cases using our cooperative navigator (see, Fig. \ref{fig:5}) which indicates the efficiency of our approach. As discussed in Sec. \ref{sec:diss}, the imposed necessity of completeness to the overall coordination process practically gives rise to longer movements and interactions between positioners. So, the trade-off between the improved convergence rate and the longer convergence time leads to the following conclusion: the available time between two consecutive observations may be shorter than the required time for the complete coordination of positioners. In this case, one has to use the competitive navigator (see, Fig. \ref{fig:4}).
\begin{figure}%[tb]
	\hspace*{-5mm}
	\includegraphics[scale=0.95]{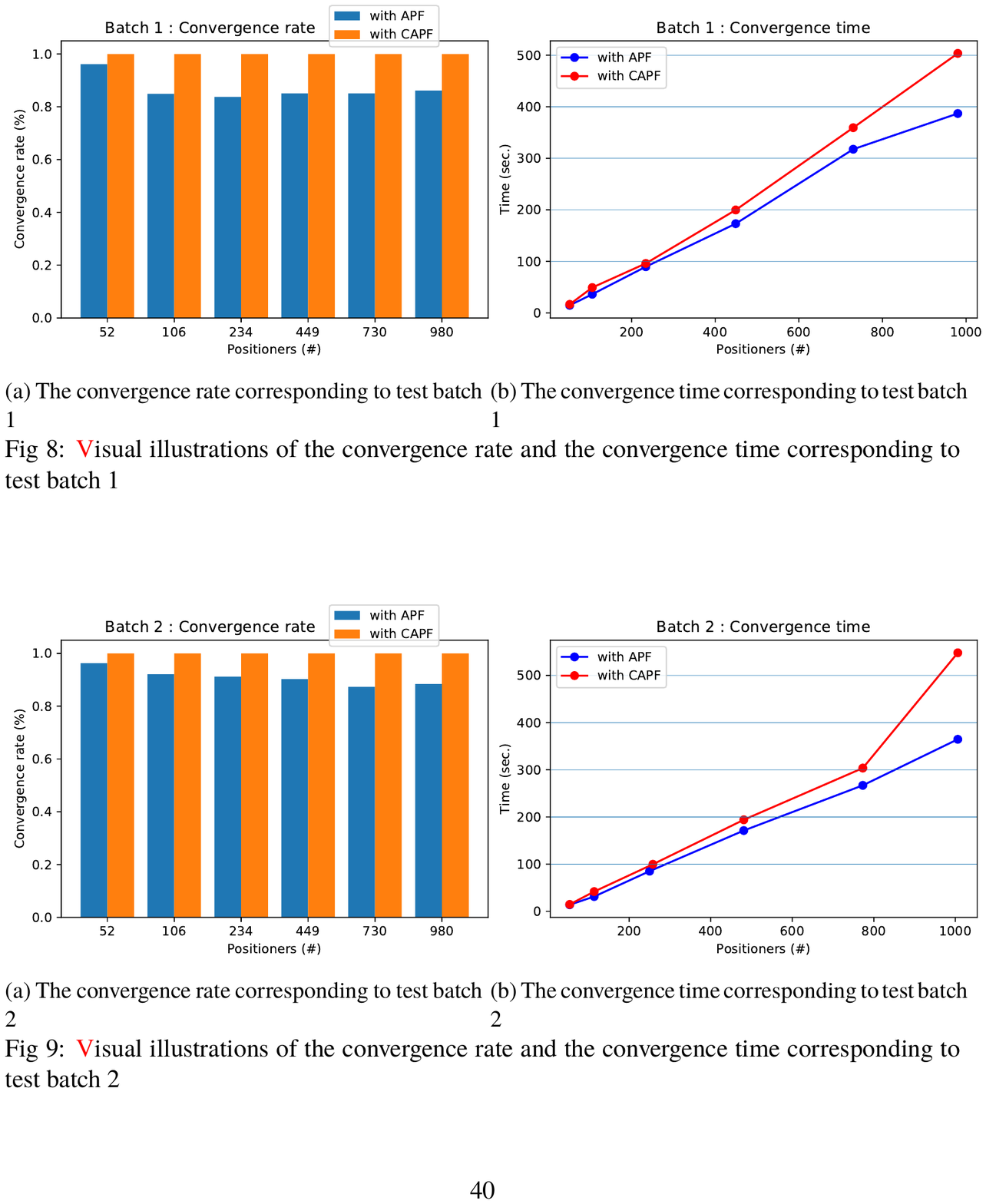}
	\caption{\z{V}isual illustrations of the convergence rate and the convergence time corresponding to test batch 1. (a) The convergence rate corresponding to test batch 1. (b) The convergence time corresponding to test batch 1.}
	\label{fig:8}
\end{figure}
\begin{figure}%[tb]
	\hspace*{-5mm}
	\includegraphics[scale=0.95]{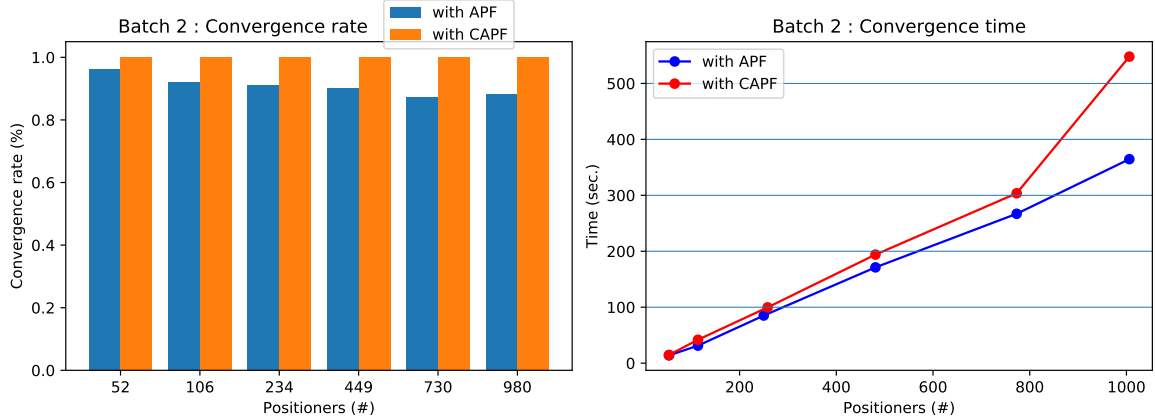}
	\caption{\z{V}isual illustrations of the convergence rate and the convergence time corresponding to test batch 2. (a) The convergence rate corresponding to test batch 2. (b) The convergence time corresponding to test batch 2.}
	\label{fig:9}
\end{figure}
\section{Conclusions}
\label{sec:conc}	
This report studied the completeness problem corresponding to the coordination of robotic optical fiber positioners. In particular, we partitioned the complicated global completeness problem into a set of relatively simpler local completeness problems. We proposed a new artificial potential field by which the completeness of a positioner and its neighboring positioners are cooperatively into account. Then, we found a completeness condition for the local completeness problem, and we showed that the simultaneous fulfillment of all those conditions associated with a positioners set in fact guarantees the global completeness of the overall system. We also presented a completeness-seeking procedure to modify a system's parameters in case the system encounters an incomplete coordination. We obtained the complete coordination at the cost of longer coordination times compared to the required times using a common artificial potential field without cooperation mechanism.
\section{acknowledgments} 
The authors would like to appreciate the anonymous reviewers whose helpful comments gave rise to the improvement of this paper.  
\bibliography{report}   % bibliography data in report.bib
\bibliographystyle{IEEEtran}  

%%%%% Biographies of authors %%%%%

\vspace{2ex}\noindent\textbf{Matin Macktoobian} received his B.Sc. in computer engineering from KNTU, Tehran, Iran, in 2013. He then obtained his M.A.Sc. in electrical and computer engineering at the University of Toronto, Toronto, Canada, in 2018. He has been currently seeking a Ph.D. in robotics, control, and intelligent system at EPFL, Lausanne, Switzerland, from 2018. His research interests include space robotics, engineering astrodynamics, and astronomical instrumentation. He is the recipient of many scholarships and awards such as the University of Toronto graduate fellowship, the Connaught international scholarship, School of Graduate Studies Conference Grant, and the gold leaf award of the PRIME 2019.

\vspace{2ex}\noindent\textbf{Denis Gillet} received his M.Sc. in electrical engineering and a Ph.D. degree in information systems in 1988 and 1995, respectively, both from EPFL. He was a research fellow at the information systems laboratory of Stanford university, US. He is currently a lead researcher and faculty member at the EPFL school of engineering, where he leads the React multi-disciplinary research group. Dr. Gillet is an executive of the STELLAR European network of excellence in technology enhanced learning. 

\vspace{2ex}\noindent\textbf{Jean-Paul Kneib} holds an M.Sc. degree in astrophysics and space technology and a Ph.D. degree in astrophysics. He has worked as a support astronomer, at ESO in Chile. He has conducted research in gravitational lensing and cosmology in Cambridge, Toulouse, Caltech and Marseille before coming to EPFL. He is currently strongly involved in the Euclid space mission. He has been a member of the ESA astronomy working group, and of the Hubble space telescope user committee. 

\end{document}